\def\colorful{1}
\newcommand{\red}[1]{{\color{red} {#1}}}
\newcommand{\red}[1]{{{#1}}}
\DeclareMathOperator{\plu}{plu}
\DeclareMathOperator{\SC}{SC}
\newcommand{\ones}{\textbf{\textnormal{1}}}
\newcommand{\cg}{\succ}
\newcommand{\cl}{\prec}
\newcommand{\cgeq}{\succeq}
\newcommand{\vect}[1]{\boldsymbol{#1}}
\renewcommand{\hat}[1]{\widehat{#1}}
\newtheorem*{rep@theorem}{\rep@title}
\newcommand{\newreptheorem}[2]{
\newenvironment{rep#1}[1]{
 \def\rep@title{#2 \ref{##1}}
 \begin{rep@theorem}\itshape}
 {\end{rep@theorem}}}
\theoremstyle{plain}
\newtheorem*{rep@claim}{\rep@title}
\newcommand{\newrepclaim}[2]{
\newenvironment{rep#1}[1]{
 \def\rep@title{#2 \ref{##1}}
 \begin{rep@claim}\itshape}
 {\end{rep@claim}}}
\theoremstyle{plain}
\begin{document}

\title{
 Metric Distortion Bounds for Randomized Social Choice 
 \vspace{15pt}}

\author{
Moses Charikar \\ \hspace{0pt}{Stanford University}  \\ \hspace{0pt}{\texttt{moses@cs.stanford.edu}}\thanks{Moses Charikar was supported by a Simons Investigator Award.} \and 
Prasanna Ramakrishnan \\ \hspace{0pt}{Stanford University} \\ \hspace{0pt}{\texttt{pras1712@stanford.edu}\thanks{Prasanna Ramakrishnan was supported by NSF CAREER Award 1942123.}}
}  

\date{\vspace{15pt}\small{\today}}

\maketitle

\begin{abstract} 
Consider the following social choice problem. Suppose we have a set of $n$ voters and $m$ candidates that lie in a metric space. The goal is to design a mechanism to choose a candidate whose average distance to the voters is as small as possible. However, the mechanism does not get direct access to the metric space. Instead, it gets each voter's ordinal ranking of the candidates by distance. Given only this partial information, what is the smallest worst-case approximation ratio (known as the \emph{distortion}) that a mechanism can guarantee?

A simple example shows that no deterministic mechanism can guarantee distortion better than $3$, and no randomized mechanism can guarantee distortion better than $2$. It has been conjectured that both of these lower bounds are optimal, and recently, Gkatzelis, Halpern, and Shah proved this conjecture for deterministic mechanisms. We disprove the conjecture for randomized mechanisms for $m \geq 3$ by constructing elections for which no randomized mechanism can guarantee distortion better than $2.0261$ for $m = 3$, $2.0496$ for $m = 4$, up to $2.1126$ as $m \to \infty$. We obtain our lower bounds by identifying a class of simple metrics that appear to capture much of the hardness of the problem, and we show that any randomized mechanism must have high distortion on one of these metrics. We provide a nearly matching upper bound for this restricted class of metrics as well. Finally, we conjecture that these bounds give the optimal distortion for every $m$, and provide a proof for $m = 3$, thereby resolving that case.

\end{abstract}

 \thispagestyle{empty}
\newpage





\section{Introduction}

Social choice theory studies ways in which a group of agents (or voters) can combine their preferences to make a collective choice among a number of alternatives (or candidates) in a way that maximizes their utility. This broad formulation can encompass a number of different problems by changing how the utility is measured, or how the preferences are compiled. One natural version of this problem that has garnered recent interest considers utility functions given by the distances in an underlying metric space, and preferences given in the form of ordinal rankings of the candidates in order of distance.

More precisely, suppose we have a set $V$ of $n$ voters and a set $C$ of $m$ candidates that lie in a metric space $(V\cup C, d)$. We would like to design a mechanism to choose the best candidate for the voters. Specifically, we define the \emph{social cost} of a candidate $c \in C$ to be their average distance to the voters:
$$\SC(c) := \frac1n \sum_{v \in V} d(c, v).$$
Under this measure, the lower a candidate's cost, the more preferable they are. If the mechanism had access to the metric directly, it could easily find the best candidate by computing $\SC(c)$ for each $c \in C$, and choosing $c^* = \arg\min_{c \in C} \SC(c).$ Instead, we suppose that the mechanism only has some partial information about the metric space, such as each voter's ranking of the candidates in order of distance. In this case, the mechanism's performance on an election instance is quantified by its approximation ratio, known in the literature as its \emph{distortion}. If the mechanism chooses a candidate $\hat{c} \in C$ (perhaps in a randomized fashion), then this is given by 
$$\frac{\E[\SC(\hat{c})]}{\min_{c \in C} \SC(c)}.$$
In general, we would like to bound the \emph{worst-case} distortion. For a fixed mechanism, this is the maximum distortion it incurs on any election instance.

This metric distortion setting has some natural motivations in practical social choice scenarios. For example, in choosing an electoral candidate in a democracy, the distance metric could represent a sort of ``ideological distance'' that aggregates agreement or disagreement about various policies. The triangle inequality then encodes the intuition that if a voter is close to two candidates, then those two candidates cannot be too far from each other. The metric space itself may be too complex or abstract to determine with sufficient accuracy, so the best we can do is elicit some information about the voters' preferences, such as their favorite candidate, or their ordinal ranking of the candidates. Of course, some of the theoretical considerations we have may not translate well to reality. For example, democracies may be apprehensive about involving randomness in their elections, and single-issue voters may disrupt the metric intuition. Nonetheless this theoretical framework can help us understand which kinds of information (that can be elicited from individual voters) are sufficiently expressive to allow a mechanism to approximately understand more nuanced aspects of the underlying preference space. 

Another application is in the facility location problem \cite{CG99,JMMSV03,CP19}, where the voters correspond to customers and the candidates correspond to possible locations for a new facility (e.g., a grocery store serving the customers in an area). Here, the metric space is much more concrete, but for the mechanism to have full access to the metric space would be to infringe on the privacy of the customers. Once again, it is natural to instead ask the customers for their preferences among the new facilities, which does not reveal too much about each individual customer, and may be enough information to choose a sufficiently good facility.

The following simple example, originally considered by \cite{ABP15}, shows that no deterministic mechanism can guarantee distortion less than 3, and no randomized mechanism can guarantee distortion less than 2. Suppose we have two candidates labeled $\{1, 2\}$, and half of the voters rank $(1, 2)$ (they prefer candidate $1$ over candidate $2$), and the other half rank $(2, 1)$. Consider the two possibilities for the underlying metric space given in Figure~\ref{fig:two-metrics}.

\begin{figure}[h!]
\centering
\begin{tikzpicture}[scale=1.5]
\node[above] at (1, 1.5) {Metric $d_1$:};
\draw[thick, color=green] (0, 1) -- (2, 1);
\node[above, color=green] at (1, 1) {0};

\draw[thick, color=orange] (0, 0) -- (2, 1);
\node[below, color=orange] at (1, .5) {2};

\draw[thick, color=blue] (0, 1) -- (2, 0);
\node[above, color=blue] at (1, .5) {1};

\draw[thick, color=blue] (0, 0) -- (2, 0);
\node[below, color=blue] at (1, 0) {1};
\draw[color=black, fill=black] (0, 0) circle (1pt);
\node[left] at (0, 0) {2};
\draw[color=black, fill=black] (0, 1) circle (1pt);
\node[left] at (0, 1) {1};
\draw[color=black, fill=black] (2, 0) circle (1pt);
\node[right] at (2, 0) {(2, 1)};
\draw[color=black, fill=black] (2, 1) circle (1pt);
\node[right] at (2, 1) {(1, 2)};
\end{tikzpicture} \qquad \qquad
\begin{tikzpicture}[scale=1.5]
\node[above] at (1, 1.5) {Metric $d_2$:};
\draw[thick, color=blue] (0, 1) -- (2, 1);
\node[above, color=blue] at (1, 1) {1};

\draw[thick, color=blue] (0, 0) -- (2, 1);
\node[below, color=blue] at (1, .5) {1};

\draw[thick, color=orange] (0, 1) -- (2, 0);
\node[above, color=orange] at (1, .5) {2};

\draw[thick, color=green] (0, 0) -- (2, 0);
\node[below, color=green] at (1, 0) {0};
\draw[color=black, fill=black] (0, 0) circle (1pt);
\node[left] at (0, 0) {2};
\draw[color=black, fill=black] (0, 1) circle (1pt);
\node[left] at (0, 1) {1};
\draw[color=black, fill=black] (2, 0) circle (1pt);
\node[right] at (2, 0) {(2, 1)};
\draw[color=black, fill=black] (2, 1) circle (1pt);
\node[right] at (2, 1) {(1, 2)};
\end{tikzpicture}
\caption{Two possible metric spaces for the election. The candidates are on the left, and the voters are on the right (lumped together by their preferences). The distance between a candidate and a voter is given by the label on the edge between them.}\label{fig:two-metrics}
\end{figure}
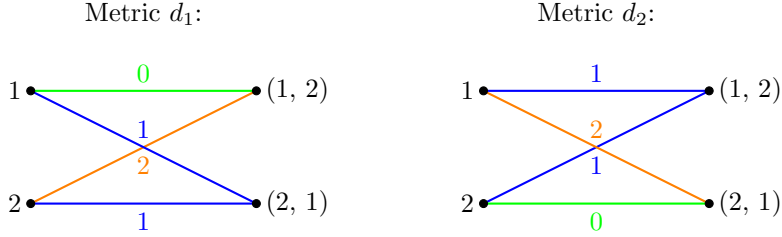

For $d_1$, we have $\SC(1) = \frac12$ and $\SC(2) = \frac32$, and for the metric space $d_2$ we have the opposite ( $\SC(1) = \frac32$ and $\SC(2) = \frac12$). Clearly then, any deterministic mechanism that picks candidate 1 incurs distortion 3 on $d_2$, and vice versa. Similarly, if a randomized mechanism picks candidates 1 and 2 with probability $p_1$ and $p_2$ respectively, then the worst-case distortion for these two metrics is 
$$\frac{\max(\frac32 p_1 + \frac12 p_2, \frac12 p_1 + \frac32 p_2)}{\frac12} = 2\left(\frac12 + \max(p_1, p_2)\right) \geq 2.$$
Note that these metric spaces can be realized in a simpler way, on the number line. The candidates 1 and 2 sit at the points 0 and 2, and either the $(1, 2)$ voters sit at 0 and the $(2, 1)$ voters sit at 1, or the $(1, 2)$ voters sit at 1 and the $(2, 1)$ voters sit at 2 (i.e., each group is either impartial, or has a strong preference). We show the above visualization since it foreshadows the metric spaces we will use later on.

A major conjecture in this framework is that the lower bounds given by the above example are optimal (see \cite[Conjecture~2]{GKM17} and \cite[Open problem~1]{AFRSV21}). A long line of work sought to prove this conjecture for deterministic algorithms, and more generally understand the worst-case distortion of several well known voting mechanisms. \cite{ABP15} first showed that constant distortion was possible, by showing that the well known Copeland rule guarantees distortion at most 5. This was later improved by \cite{MW19,Kem20b} to $2 + \sqrt{5} \approx 4.236$. Both works also outlined sufficient conditions that would allow their techniques to attain distortion $3$. Finally, building on this work, \cite{GHS20} gave an mechanism that guarantees distortion $3$, thus resolving the conjecture for deterministic mechanisms. A more thorough discussion of this line of work, including bounds for specific voting rules can be found in \cite{AFRSV21}.

With the optimal distortion for deterministic mechanisms resolved, we turn our attention to randomized mechanisms, where much less is known. \cite{AP17} first showed that the Randomized Dictatorship rule (which selects a random voter and chooses their top candidate) can guarantee distortion $3 - 2/n$, which beats any deterministic algorithm (albeit by a vanishingly small amount). Later, \cite{Kem20a} gave an algorithm that guarantees distortion $3 - 2/m$, which remains the best known. Notably, this resolves the optimal distortion for $m = 2$, but leaves a gap for all $m \geq 3$.

\subsection{Our contributions and technical overview}

Our main contribution is proving that no randomized election mechanism can guarantee distortion 2 for elections with more than two candidates, thus disproving the conjecture of \cite[Conjecture~2]{GKM17}. Specifically, we show the following theorem. 

\begin{theorem}\label{thm:LBs}
For each $m \geq 3$, there is a positive constant $\gamma_m$ such that no randomized election mechanism on $m$ candidates can achieve distortion less than $2 + \gamma_m$. In particular, $\gamma_3 \geq 0.02613$, $\gamma_4 \geq 0.04957$, up to $\displaystyle\lim_{m \to \infty} \gamma_m \geq 0.11264$.
\end{theorem}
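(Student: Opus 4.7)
My plan is to prove the lower bound via Yao's minimax principle. For any fixed ordinal profile $\pi$, a randomized mechanism reduces to a distribution $(p_1, \ldots, p_m)$ over the candidates, and its worst-case distortion on $\pi$ is at least
$$\max_{M \in \mathcal{M}(\pi)} \sum_{c} p_c \cdot \frac{\SC_M(c)}{\min_{c'} \SC_M(c')},$$
where $\mathcal{M}(\pi)$ is the set of metrics consistent with $\pi$. Since this expression is linear in $p$, minimax lets me dualize: it suffices to exhibit a profile $\pi$ together with a distribution $\mu$ over $\mathcal{M}(\pi)$ such that, for every candidate $c$,
$$\E_{M \sim \mu}\!\left[\frac{\SC_M(c)}{\min_{c'} \SC_M(c')}\right] \;\geq\; 2 + \gamma_m.$$

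The heart of the argument is choosing $\pi$ and $\mu$. I would place the $m$ candidates at positions $0, 1, \ldots, m-1$ on the real line, directly extending the visualization at the end of the two-candidate example. Voters are partitioned into groups indexed by their favored candidate; each voter in group $i$ sits at some integer point consistent with placing candidate $i$ first. A metric $M$ in the support of $\mu$ corresponds to one joint placement of all groups; different metrics in the support of $\mu$ displace different subsets of groups between their preferred candidate and an adjacent midpoint. The group sizes and the weights in $\mu$ are the free parameters, chosen so that every candidate is penalized by a factor at least $2 + \gamma_m$ when averaged against $\mu$. This family is the ``simple class of metrics'' referenced in the abstract.

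Once the family is fixed, each ratio $\SC_M(c)/\min_{c'}\SC_M(c')$ is an explicit piecewise-linear function of the parameters, and maximizing $\gamma_m$ over valid $(\mu, \text{group sizes})$ reduces to a finite linear program. I would solve this LP for $m = 3$ and $m = 4$ to obtain the claimed constants $0.02613$ and $0.04957$, and then pass to a continuous variational limit over placements in $[0,1]$ to recover the asymptotic constant $0.11264$. The LP duals should expose which metrics and which candidates are tight in equilibrium, guiding the limiting analysis.

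The principal obstacle will be identifying the correct shape of the hard family. The naive extension of Figure~\ref{fig:two-metrics}, where each voter group independently decides whether to stand at its preferred candidate or at a midpoint, yields only the trivial bound of $2$; improving on this requires (i) \emph{correlating} the placements of distinct groups across the metrics supporting $\mu$, so that a single candidate is consistently penalized, and (ii) choosing \emph{unequal} group sizes so that the tightness constraint binds simultaneously on more than one candidate, which is what forces $\gamma_m > 0$. A secondary hurdle is the limit $m \to \infty$: I would look for a continuous reformulation whose optimum can be identified with $0.11264$ via a first-order stationarity analysis.
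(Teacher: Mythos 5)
Your Yao's-minimax framing is fine and is in fact the LP dual of the paper's approach (the paper writes the mechanism's probabilities as LP variables and exhibits nonnegative multipliers on metric constraints; normalizing those multipliers gives exactly your distribution $\mu$). The gap is in the choice of the hard family. You propose line metrics with candidates pinned at $0,1,\ldots,m-1$ and voters at integer points or midpoints. This is the wrong generalization of Figure~\ref{fig:two-metrics}, and the paper explicitly flags this: it notes the two-candidate example can be realized on the line, but deliberately draws the bipartite picture ``since it foreshadows the metric spaces we will use later on.'' The paper's $(0,1,2,3)$-metrics are bipartite-style metrics in which each voter is at distance $0$ or $1$ from the favored candidate $i$ and at distance $d_i(i,v)$ or $d_i(i,v)+2$ from every other candidate; for $m\geq 3$ these are generally \emph{not} embeddable on a line. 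Concretely, the paper's $m=3$ hard profile uses the orderings $(1,3,2),(2,3,1),(3,2,1)$, and with candidates at $0,1,2$ the ordering $(1,3,2)$ is geometrically impossible, so the hard profile itself cannot be posed under your placement. Even if you allowed arbitrary candidate positions, restricting the adversary's support to one-dimensional metrics is a genuine weakening: the mechanism can exploit the 1D structure (e.g., via the median), and it is not clear one can push past $2$ this way.

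Your point (i), about correlating placements across the metrics in $\mathrm{supp}(\mu)$, is the right instinct, but the correlation the paper uses is not a displacement along a shared line; it is a per-candidate bias. The $i$-th metric $d_i$ makes candidate $i$ as cheap as possible ($\SC(i,d_i)=1-\plu(i)$) while making the gap $\SC(j,d_i)-\SC(i,d_i)=2s_{i\succ j}$ as large as possible for every $j$, which is exactly what yields the clean linear system $M\vect{p}\leq \ones - \vect{\plu}$ and lets the bound be written as $\ones^\top M^{-1}(\ones-\vect{\plu})$. The paper also emphasizes a subtle but essential detail you would be unlikely to stumble on via line metrics: the naive ``$(1,3)$-metrics'' (all distances $1$ or $3$) already satisfy all the constraints but provably cannot beat $2$ (Lemma~\ref{lem:clevermatrix} with $\vect{x}=\vect{0}$); the lower bound crucially depends on moving the plurality voters of $i$ to distance $0$ in $d_i$, dropping the $(1,3)$-metric to a $(0,1,2,3)$-metric. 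Finally, the asymptotic constant $0.11264$ comes from taking $k=m-3\to\infty$ in a closed-form $\beta(a,b,c,k)$ after fixing the $M$-structure in~(\ref{eq:MandPlu}), not from a continuous relaxation of line placements.
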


A more detailed list of the distortion lower bounds for small values of $m$ can be found in Table~\ref{tb:distlbs}.  

Throughout the paper, we use linear programming as a lens through which we view the metric distortion problem. This framing is introduced in Section~\ref{sec:reframing}. While previous work has used LPs in this setting (such as \cite{Kem20b}), our approach is quite different. In \cite{Kem20b}'s LP, the variables are the distances of the hidden metric, and the LP finds the worst metric for a given election mechanism\footnote{We note that \cite{Kem20b}'s LP is suited to deterministic mechanisms, but it can be adapted to randomized mechanisms as well.} (the one which maximizes the distortion). Given this, one can select the mechanism that has the \emph{best} worst-case distortion.

In our LPs, the variables are the weights that the mechanism gives each candidate (before normalization), and each possible metric that is consistent with the election imposes a constraint, essentially indicating that the distortion on that metric must be small. While the fact that the LP has infinitely many constraints may seem like a downside, the fact that these constraints are much easier to reason about and manipulate makes this LP a valuable tool. As it turns out, we only need to consider a small number of the constraints to establish our lower bounds, and our work on upper bounds centers around showing that all but a handful of constraints in the LP can be ignored.

The heart of the lower bound argument is our definition of a simple class of metrics, called \emph{$(0,1,2,3)$-metrics}, that in some sense capture the hardness of a given election instance. Each such metric is biased towards a particular candidate by making the distances from that candidate to the voters small, but making other distances as large as possible. This makes it difficult for a randomized mechanism to balance all of the constraints imposed by these metrics. We show that if we have an election where $M$ is the matrix whose $(i, j)$th entry is the proportion of voters that prefer $i$ over $j$ and $\vect{{\plu}}$ is the vector whose $i$th entry is the proportion of voters that rank $i$ first, then any randomized election mechanism must have high distortion on at least one of the $(0,1,2,3)$-metrics as long as the column sums of $M^{-1}$ are nonnegative, and the expression $\vect{1}^\top M^{-1}(\vect{1} - \vect{{\plu}})$ is small (specifically, if it is at most $2 - \eps$, we get a distortion lower bound of $2 + \eps'$). The simplicity of these metrics makes them appealing options for proving lower bounds not only for this problem, but for other problems involving hidden metric spaces and ordinal rankings.

It may seem a priori that the lower bounds we obtain are artifacts of the precise construction that we use, and not necessarily tight bounds for the problem.
However, in Section~\ref{sec:3UB} we show that for elections with $3$ candidates, we have an upper bound that matches the lower bound given by Theorem~\ref{thm:LBs}. The approach is to show that the only metric constraints that truly matter are those for the $(0,1,2,3)$-metrics, and then we show that the hardest election for these metrics is precisely the one we constructed for our lower bound. We do this by introducing two intermediate classes of metrics, called \emph{biased metrics} and \emph{generalized $(0,1,2,3)$-metrics}. We show that for any election, all constraints besides the biased metric constraints are redundant, and when $m = 3$, we can further reduce the biased metric constraints to the generalized $(0,1,2,3)$-metric constraints. Finally, we show that either only the $(0,1,2,3)$-metric constraints are not redundant, or we can give one candidate 0 weight and reduce to an election with fewer candidates. The last step is to show that the worst case election for the $(0,1,2,3)$-metrics is indeed what we constructed for our lower bound. We do this by fudging a given election in a careful way that only increases the worst case distortion, and brings it into the form of our lower bound constructions. We also show a weaker version of this last step for general $m$ in Section~\ref{sec:0123UB}. Specifically, we show that if the underlying metric space for an election instance is guaranteed to be a $(0,1,2,3)$-metric, then there is a mechanism that guarantees distortion at most $2.1397$. In the proof, we use some manipulation of the constraints using linear algebra and inequalities as an alternative to the fudging approach that does not scale well when $m$ increases.

The main problem left open in our work is whether there exist randomized election mechanisms that can guarantee distortion matching our lower bounds, or even distortion $3 - \varepsilon$ for some constant $\varepsilon$. 
We offer two new mechanisms in Section~\ref{sec:dis} that we believe can be leveraged to resolve this problem. These two mechanisms are based on our LP approach and our conjecture that the $(0,1,2,3)$-metrics truly capture the hardness of the problem.

\section{Preliminaries and notation}\label{sec:prelims}

\textbf{Elections.} Throughout the paper, the elections we consider are over a set of $n$ voters $V$, and $m$ candidates $C$. For ease, we will let $C = [m] = \{1, 2,..., m\}$ and we will index candidates by variables $i, j, k, \ell$, and voters by $u$ and $v$. In general, $n$ should be thought of as very large, and we will have expressions that concern the proportion of voters of a certain type, which we allow to take on real numbered values in $(0, 1)$. These numbers will be attainable for real elections with arbitrarily good accuracy if $n$ is allowed to be sufficiently large.

Let $\plu(i)$ denote the set of voters that rank $i$ first. With some abuse of notation, we will also let $\plu(i)$ denote the proportion of voters that rank $i$ first, and we will let $\vect{{\plu}}$ denote the vector whose $i$th entry is $\plu(i)$. Similarly, let $\plu(I)$ denote the set/proportion of voters that rank the candidates of $I$ above all other candidates. We use $\plu(i, j)$ as shorthand for $\plu(\{i, j\})$ to keep notation lean.

If a voter $v$ prefers candidate $i$ over candidate $j$, we write $i \cg_v j$. For a pattern $\cal{P}$, we let $S_{\cal{P}}$ denote the subset of voters whose preferences fit the pattern $\cal{P}$, and $s_{\cal{P}} = |S_{\cal{P}}|/n$ is the proportion of these voters overall. For example,  $S_{i \cg j}$ is the set of voters that prefer $i$ over $j$, $S_{i, j \cg k}$ is the set of voters that prefer $i$ and $j$ over $k$, and $S_{I \cg j}$ is the set of voters that prefer all the candidates in $I$ over $j$. We use this notation flexibly to make the notation as lean as possible, but it should be clear from context what is being expressed. 

We use $M$ to denote the $m\times m$ \emph{comparisons matrix} of an election, given by $M_{i, j} = s_{i \cg j}$ for $i \neq j$ and $M_{i,i} = 0$ for each $i$. As $s_{i \cg j} + s_{j \cg i} = 1$, $M$ has the property that $M + M^\top = J  - I$, where $J$ is the all ones matrix, and $I$ is the identity matrix. Matrices with this property are known as \emph{generalized tournament matrices} in the literature, since they can be represented the adjacency matrix of a weighted tournament graph, where the weight of edge $(i, j)$ is the proportion of wins $i$ has against $j$.

For an election with underlying distance metric $d$, we denote the social cost of a candidate $i$ to be their average distance to the voters. i.e., 
$$\SC(i, d) := \frac1n \sum_{v \in V} d(i, v).$$
We may just write $\SC(i)$ when the relevant distance metric is clear from context. We note that in the literature, the social cost of a candidate is usually the \emph{total} distance rather than the average. Since $n$ is a constant for a given election instance, this makes no difference. We find the average easier to work with, since we consider proportions of voters. \\

\noindent\textbf{Metric spaces.} Per the standard definition, a metric space is a pair $(\mcal{M}, d)$ of a set $\mcal{M}$ and a distance metric $d: \mcal{M} \times \mcal{M} \to \mathbb{R}_{\geq 0}$ with the following three properties:
\begin{enumerate}[label=(\arabic*)]
	\item Identity of indiscernibles: $d(x, y) = 0 \iff x = y$, 

	\item Symmetry: $d(x, y) = d(y, x)$,

	\item Triangle inequality: $d(x, y) \leq d(x, z) + d(z, y)$.

\end{enumerate}

At times, we may operate as if $\mcal{M} = V\cup C$, but strictly speaking each member of $V \cup C$ just ``occupies'' a point in $\mcal{M}$. This means that we allow distinct members of $V \cup C$ to occupy the same point in the metric space (and therefore have distance 0). This effectively means that we can ignore the identity of indiscernibles axiom. 

The key property we need to be careful about in defining our metrics is the triangle inequality. Typically, we will not explicitly define all of the distances between members of $V \cup C$. For example, we may just define the distances $d(i, v)$ for $i \in C, v \in V$, in which case it is easy to see that the triangle inequality can be satisfied as long as we satisfy the 3-hop constraints $d(j, v) \leq d(j, u) + d(i, u) + d(i, v)$ for all $i, j\in C$ and  $u, v \in V$. 

In general, if not all of the distances are explicitly defined, then the ``full'' metric should be thought of as the \emph{graph distance closure} of what is defined. More specifically, we define an edge-weighted graph with vertices $V \cup C$, and edges where distances have been explicitly defined with weights given by those distances. Then the distance between an arbitrary pair of members in $V \cup C$ is their minimum distance in this graph. As long as the explicitly defined distances satisfy the property that no edge has greater weight than some path between its endpoints (a generalization of the triangle inequality), the resulting metric will be valid.

\section{Reframing the problem}\label{sec:reframing}

Suppose we have a fixed election, with candidates labeled $\{1, 2, ..., m\}$. Let $\mcal{D}$ denote the set of distance metrics $d$ that are consistent with the election, i.e., if $i \cg_v j$ then $d(i, v) \leq d(j, v)$. 

Now, suppose we have a voting rule that chooses candidate $i$ with probability $p_i$. Then for this voting rule to guarantee distortion $1 + \alpha$ on this election, it means that for each $d \in \mcal{D}$, the following linear constraint is satisfied:
\begin{equation}\label{eq:constraint}
\frac{\sum_{i = 1}^m p_i \SC(i, d)}{\min_{j} \SC(j, d)} \leq 1 + \alpha \iff \sum_{i = 1}^m p_i (\SC(i, d) - \min_{j} \SC(j, d)) \leq \alpha \min_{j} \SC(j, d).
\end{equation}
The optimal voting rule chooses the variables $p_i$ so that $\alpha$ can be made as small as possible. To determine the best value of $\alpha$, we can consider the following linear program.
\begin{align*}
\text{maximize} \quad &\sum_{i = 1}^m p_i & \label{eq:LP1} \tag{A}\\
\text{subject to}\quad  \sum_{i = 1}^m  (\SC(i, d) - \min_{j} \SC(j, d)) p_i &\leq  \min_{j} \SC(j, d),  &\forall d \in \mcal{D}\\
p_i& \geq 0, &1 \leq i \leq m 
\end{align*}
This linear program effectively scales each of the original $p_i$s by $\frac1\alpha$, so that minimizing $\alpha$ is the same as maximizing $\sum p_i$. Indeed, it is easy to check that if we have a feasible solution to the LP with $\sum p_i = \beta$, then $p_i \leftarrow p_i/\beta$ satisfies all the inequalities of (\ref{eq:constraint}) with $\beta = \frac{1}{\alpha}$. Similarly, if we have any choice of $p_i$s that satisfies the inequalities of (\ref{eq:constraint}) for some $\alpha$, then $p_i \leftarrow p_i/\alpha$ is a feasible solution to the LP with objective function $\frac{1}{\alpha}$. It follows that optimizing the LP and optimizing the distortion are the same, formally stated as the following proposition.

\begin{proposition}\label{prop:LPequiv}
Suppose we have an election, with a corresponding LP given by \textnormal{(\ref{eq:LP1})}. If the optimal objective of the LP is $\beta$ then the optimal voting mechanism attains distortion $1 + \frac1\beta$ on this election. 
\end{proposition}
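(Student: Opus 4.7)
The plan is to establish a bijection between feasible LP solutions with positive objective value and valid randomized voting mechanisms, through the scaling transformation $p_i \mapsto p_i/\beta$ (or its inverse $p_i' \mapsto p_i'/\alpha$). This is the natural dictionary the paper already hints at in the paragraph above the proposition: the LP variables are essentially ``unnormalized'' mechanism weights.

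First I would handle the forward direction. Suppose $(p_i)_{i=1}^m$ is a feasible solution to (\ref{eq:LP1}) with $\sum_i p_i = \beta$. Since all constraints are homogeneous of degree one in the $p_i$ on the left-hand side except for the right-hand side $\min_j \SC(j,d)$, dividing through by $\beta$ gives nonnegative numbers $p_i' := p_i/\beta$ that sum to $1$ (hence form a valid probability distribution) and satisfy, for every $d \in \mcal{D}$,
\[
\sum_{i=1}^m (\SC(i,d) - \min_j \SC(j,d))\, p_i' \;\leq\; \frac{1}{\beta} \min_j \SC(j,d).
\]
Rearranging this exactly recovers the distortion bound (\ref{eq:constraint}) with $\alpha = 1/\beta$, so the mechanism defined by $(p_i')$ achieves distortion at most $1 + 1/\beta$ on the election.

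Next I would handle the reverse direction. Given any randomized mechanism $(p_i')$ achieving distortion $1 + \alpha$ on the election (with $\alpha > 0$; note $\alpha = 0$ is impossible in nontrivial instances, so the LP objective is always finite), define $p_i := p_i'/\alpha$. These are nonnegative, and rearranging (\ref{eq:constraint}) shows that they satisfy every LP constraint in (\ref{eq:LP1}). Their sum is $1/\alpha$, so this is a feasible LP solution with objective $1/\alpha$.

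Combining the two directions, the optimal LP objective $\beta^*$ equals $1/\alpha^*$, where $\alpha^* + 1$ is the optimal distortion attainable on the election. The only minor care needed is to observe that the LP is feasible (the all-zeros vector is feasible with objective $0$) and bounded (any feasible $p_i$'s with large sum would contradict the trivial distortion upper bound inherited from, e.g., Randomized Dictatorship), so the optimum is attained and the statement $1 + 1/\beta^*$ is well-defined. I do not anticipate any real obstacle; the argument is essentially a homogeneity check, and the main thing to verify cleanly is that the LP constraints and the per-metric distortion inequality transform into each other exactly under the scaling.
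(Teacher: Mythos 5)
Your proof is correct and takes essentially the same route as the paper: both sides establish the scaling dictionary $p_i \leftrightarrow p_i/\beta$ between feasible LP solutions and valid mechanisms, so that maximizing $\sum_i p_i$ is exactly minimizing $\alpha$. One small slip in your boundedness aside: a large LP objective corresponds to \emph{low} distortion, so unboundedness would be ruled out by a distortion \emph{lower} bound (e.g., the presence of some metric on which the chosen candidate is suboptimal), not by a trivial upper bound from Randomized Dictatorship; this remark is tangential and does not affect the core argument.
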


This reframing of the problem as a linear program is a key foundation of our approach, and we will refer to the LP (\ref{eq:LP1}) (and variants of it) repeatedly in the paper. The perspective of this LP has benefits both for lower and upper bounds. For lower bounds, we show that by just considering a small number of special constraints, we can design elections where the optimal objective function of (\ref{eq:LP1}) must be small. For upper bounds, we can argue that most of the constraints in (\ref{eq:LP1}) are redundant, and use the structure of the non-redundant constraints to reason about how large the optimal objective must be.

\section{Lower bounds}\label{sec:LBs}

The goal of this section is to prove Theorem~\ref{thm:LBs}. We obtain our lower bounds by considering a restricted class of metrics, and constructing election instances where the LP constraints in (\ref{eq:LP1}) corresponding to these metrics force the objective function to be small. 
These metrics are defined as follows.

\begin{definition}[(0,1,2,3)-metrics]
Given an election, its \emph{$(0,1,2,3)$-metrics} $\{d_1, d_2, ..., d_m\}$ are given by 
$$d_i(i, v) = \begin{cases}
0 & v \in \plu(i)\\
1 & v \notin \plu(i)
\end{cases}$$
and for $j \neq i$, 
$$d_i(j, v) = d_i(i, v) + \begin{cases}
0 & j \cg_v i \\
2 & j \cl_v i
\end{cases}.$$
\end{definition}

The $(0,1,2,3)$-metrics are named as such because all of the distances are either $0, 1, 2,$ or $3$. Figure~\ref{fig:0123metrics} (in Appendix~\ref{sec:appendix}) gives a visualization of the $(0,1,2,3)$-metrics in our lower bound instance for $m = 3$. These metrics may seem contrived at first, but a natural way to arrive at them is as follows. 

We would like to design a metric $d_i$ for each $i$ where $\SC(i, d_i)$ is small, but $\SC(j, d_i) - \SC(i, d_i)$ is as large as possible for each $j \neq i$. These are metrics that give us particularly restrictive constraints in (\ref{eq:LP1}). A natural place to start is to set $d_i(i, v) = 1$ for each $v$, and then set the other distances to be as large as possible subject to the constraints that $j \cg_v i \implies d_i(j, v) \leq d_i(i, v)$ and the triangle inequality which gives us 3-hop constraints $d(j, v) \leq d(j, u) + d(i, u) + d(i, v)$ for any candidate $j$ and voters $u$ and $v$. The first constraints imply that $d_i(j, v) = 1$ whenever $j \cg_v i$, and the second imply that $d_i(j, v) \leq 3$ (as long as there exists some voter $u$ such that $j \cg_u i$, which we can assume because otherwise we can disregard candidate $j$ completely). Setting $d_i(j, v) = 3$ whenever $i \cg_v j$ indeed satisfies all the necessary conditions. We can call these metrics \emph{$(1, 3)$-metrics}. Once we have these metrics defined, it is not hard to see that if for $v \in \plu(i)$ we change $d_i(i, v)$ from 0 to 1 and change $d_i(j, v)$ from 3 to 2, then all of the same conditions are still satisfied. But now, $\SC(i, d_i)$ is smaller, and $\SC(j, d_i) - \SC(i, d_i)$ is unchanged. Thus, these metrics, which are exactly our $(0,1,2,3)$-metrics, give us even tighter constraints for (\ref{eq:LP1}). We note that this adjustment is necessary, since a corollary of Lemma~\ref{lem:clevermatrix} tells us that we can guarantee distortion 2 on $(1,3)$-metrics.

We also note that $(0,1,2,3)$-metrics can be viewed as a generalization of the metrics considered in Figure~\ref{fig:two-metrics}, which are in fact $(0,1,2,3)$-metrics themselves.

It is not difficult to check via casework that the $(0,1,2,3)$-metrics do indeed satisfy the triangle inequality, but we will postpone a formal proof to Section~\ref{ssec:biasdefs}, where we prove this for a larger class of metrics that includes the $(0,1,2,3)$-metrics. 

Next, we consider the LP constraints that are imposed by the $(0,1,2,3)$-metrics. Using the notation introduced in Section~\ref{sec:prelims}, we have that $\min_j \SC(j, d_i) = \SC(i, d_i) = 1 - \plu(i)$, and $\SC(j, d_i) - \SC(i, d_i) = 2s_{i \cg j}$. Thus, the constraint we get for $d_i$ is 
$$2 \sum_{j \neq i} s_{i \cg j}p_j \leq  1 - \plu(i).$$


Since the factor of $2$ can be absorbed into the variables $p_i$, we have the following lemma.

\begin{lemma}\label{lem:0123lb}
Suppose we have an election such that if, 
\begin{equation}\label{eq:0123csts}
\sum_{j \neq i} s_{i \cg j}p_j \leq  1 - \plu(i) 
\end{equation}
for each $1 \leq i \leq m$, then $\sum_i p_i \leq \beta$. Then no randomized election mechanism can guarantee distortion less than $1 + \frac{2}{\beta}$.  
\end{lemma}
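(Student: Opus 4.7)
The plan is to derive a contradiction from the LP reframing in Section~\ref{sec:reframing}, feeding in precisely the $(0,1,2,3)$-metric constraints that were computed just above the lemma statement. Suppose for contradiction that some randomized mechanism achieves distortion $1 + \alpha$ with $\alpha < 2/\beta$ on the given election, and let $q_j \geq 0$ denote the probability the mechanism assigns to candidate $j$, with $\sum_j q_j = 1$. Since each $d_i$ is a metric consistent with the election (the triangle inequality verification is deferred to Section~\ref{ssec:biasdefs}), the distortion inequality~\eqref{eq:constraint} must hold at $d_i$. Substituting $\min_k \SC(k, d_i) = 1 - \plu(i)$ and $\SC(j, d_i) - \SC(i, d_i) = 2 s_{i \cg j}$ simplifies this to
$$2 \sum_{j \neq i} s_{i \cg j}\, q_j \;\leq\; \alpha\,(1 - \plu(i))$$
for every $i \in [m]$.

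The next step is a rescaling: setting $p_j := 2 q_j / \alpha$ produces a nonnegative vector that satisfies the system~\eqref{eq:0123csts} for every $i$. Invoking the hypothesis of the lemma on this $p$ gives $\sum_j p_j \leq \beta$, and substituting $p_j = 2q_j/\alpha$ together with $\sum_j q_j = 1$ yields $2/\alpha \leq \beta$, i.e.\ $\alpha \geq 2/\beta$. This contradicts the assumption $\alpha < 2/\beta$, so no mechanism can guarantee distortion less than $1 + 2/\beta$ on this election, as claimed.

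The argument is essentially a one-line rescaling of the $(0,1,2,3)$-metric LP constraints, so there is no real obstacle past writing it out. The only genuine prerequisite is the triangle inequality check for $(0,1,2,3)$-metrics, which the paper handles as a corollary of a broader verification in Section~\ref{ssec:biasdefs}; I would simply cite it. If one preferred to prove it locally, the casework is easy: $d_i(j,v) \leq 3$ for all $j,v$ and $d_i(j,u) + d_i(i,u) + d_i(i,v) \geq 2$ whenever $d_i(i,u), d_i(i,v) \in \{0,1\}$ and at least one of $d_i(j,u)$, $d_i(i,u)$ is positive, which handles the nontrivial $3$-hop constraints.
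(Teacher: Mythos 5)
Your proof is correct and is essentially the paper's argument with the LP abstraction unfolded: the paper scales the hypothesized constraints by a factor of $2$, observes they are constraints of LP~(\ref{eq:LP1}), and invokes Proposition~\ref{prop:LPequiv}, whereas you perform the same rescaling directly inside a proof by contradiction via the normalized probability vector. Both hinge on the identical computations $\min_j \SC(j, d_i) = 1 - \plu(i)$ and $\SC(j, d_i) - \SC(i, d_i) = 2s_{i\cg j}$ for the $(0,1,2,3)$-metrics, so the routes are the same in substance.
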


\begin{proof}
If the condition of the lemma is satisfied, then it follows that if the $p_i$s satisfy the inequalities $2\sum_{j \neq i} s_{i \cg j}p_j \leq  1 - \plu(i)$ for each $i$, then $\sum_i p_i \leq \beta/2$. Since these inequalities are all constraints in (\ref{eq:LP1}), this implies that the optimal objective of (\ref{eq:LP1}) is at most $\beta/2$. This implies by Proposition~\ref{prop:LPequiv} that the optimal voting rule for this election has distortion at least $1 + 2/\beta$, as desired.
\end{proof}

The remainder of this section is devoted to constructing elections that satisfy the condition of Lemma~\ref{lem:0123lb} with small values of $\beta$.

The inequalities of (\ref{eq:0123csts}) can be written as
$$M\vect{p}  \leq \vect{1} - \vect{{\plu}}.$$
Here, $\vect{p}$ denotes the column vector whose $i$th entry is $p_i$, and $\vect{1}$ denotes the all ones vector. Now, suppose that we construct our election so that $M$ is invertible, and the column sums of $M^{-1}$ (i.e., the entries of the vector $\vect{1}^\top M^{-1}$) are nonnegative. Then it follows that 
$$\vect{1}^\top M^{-1}M\vect{p} \leq  \vect{1}^\top M^{-1}(\vect{1} - \vect{{\plu}}) \implies \vect{1}^\top\vect{p} \leq \vect{1}^\top M^{-1}(\vect{1} - \vect{{\plu}}).$$
Here, the left inequality can be interpreted as multiplying the inequalities in (\ref{eq:0123csts}) by the entries of $\vect{1}^\top M^{-1}$ (which are nonnegative, so the inequality does not flip), and adding up the inequalities. As $\vect{1}^\top\vect{p}$ is just $\sum_i p_i$ this tells us that $\sum_i p_i \leq \vect{1}^\top M^{-1}(\vect{1} - \vect{{\plu}}),$ so we can take this to be $\beta$ in Lemma~\ref{lem:0123lb}. Now we would like to construct elections so that $\vect{1}^\top M^{-1}(\vect{1} - \vect{{\plu}})$ is as small as possible. 

Our elections will have $M$ and $\vect{{\plu}}$ in the following form for some parameters $a, b, c$.

\begin{equation}\label{eq:MandPlu}
M = \begin{bmatrix}
0 & a & a & a & \cdots & a\\
1 - a & 0 & b & b & \cdots & b\\
1 - a & 1 - b & 0 & c & \cdots & c\\
1 - a & 1 -b &  1 - c &  0 & \cdots & \frac12\\
\vdots & \vdots & \vdots & \vdots & \ddots & \vdots \\
1 - a & 1 -b &  1 - c &  \frac12 & \cdots & 0\\
\end{bmatrix} \qquad \qquad 
\vect{{\plu}} = \begin{bmatrix}
a\\
b\\
1 - (a + b)\\
0\\
\vdots \\
0
\end{bmatrix}.
\end{equation}

Here, the bottom right $(m - 3)\times(m - 3)$ submatrix of $M$ is $0$ along the diagonal and $\frac12$ elsewhere. Clearly for this to be attainable by a real election, we must have $a + b + c \geq 1$, since $s_{3 \cg 4} \geq \plu(3)$ implies $c \geq 1 - (a + b)$.

We will argue that when $a + b + c \geq 1$, we can construct elections such that $M$ and $\vect{{\plu}}$ are in the above form.

For $m = 3$, $c$ does not come into play, and the above form can be attained just by having the ranking permutations $(1, 3, 2), (2, 3, 1), (3, 2, 1)$ in proportions $a, b, 1 - (a + b)$ respectively. For $m = 4$, we can get the above form by inserting 4 next to 3 in some careful proportions:

\begin{center}
\begin{tabular}{ c|c }
\text{Permutation} & \text{Proportion}\\   
\hline
(1, 4, 3, 2) & $a\cdot\frac{1 - c}{a + b}$\\
(1, 3, 4, 2) & $a\left(1 - \frac{1 - c}{a + b}\right)$\\
(2, 4, 3, 1) & $b\cdot\frac{1 - c}{a + b}$\\
(2, 3, 4, 1) & $b\left(1 - \frac{1 - c}{a + b}\right)$\\
(3, 4, 2, 1) & $1 - (a + b)$\\
\end{tabular}
\end{center}

Since $a + b + c \geq 1$, all these proportions are nonnegative. Finally, for $m > 4$ we can replace $4$ in the above example with each permutation of $(4, 5, ..., m)$ in equal proportion to get the desired forms of $M$ and $\vect{{\plu}}$.

For ease, let $k = m - 3$. We can check by matrix multiplication that $M^{-1}$ is given by 
$$ \mkern-88mu \frac{1}{(k + 1) - 2kc(1 - c)}\begin{bmatrix}
-(k + 1)\frac{b(1 - b)}{a(1 - a)} & \frac{(k + 1)(1 - b) - 2kc(1 - c)}{1 - a} & \frac{b((k + 1) - 2kc)}{1 - a} & \frac{2bc}{1 - a} & \cdots & \frac{2bc}{1 - a}\\
\frac{(k + 1)b - 2kc(1 - c)}{a} & -(k + 1) & (k + 1) - 2kc & 2c & \cdots & 2c\\
\frac{(1 - b)(2kc - (k - 1))}{a} & (k + 1) - 2k(1 - c) & -(k + 1) & 2(1 - c) & \cdots & 2(1 - c)\\
\frac{2(1 - b)(1 - c)}{a} & 2(1 - c) & 2c &  -2k + 4(k - 1)c(1 - c) & \cdots & 2(1 - 2c(1-c))\\
\vdots & \vdots & \vdots & \vdots & \ddots & \vdots \\
\frac{2(1 - b)(1 - c)}{a} & 2(1- c) &  2c &  2(1 - 2c(1-c)) & \cdots & -2k + 4(k - 1)c(1 - c)
\end{bmatrix}.$$

To be clear, the bottom right $k \times k$ matrix has $-2k + 4(k - 1)c(1 - c)$ along the diagonal, and $2(1 - 2c(1-c))$ elsewhere. We note for the interested reader that if we fix $c = 1/2$, then this matrix (and the math that follows) simplifies significantly, and this is enough to get a lower bound of $\frac{1 + 2\sqrt{7}}{3} \approx 2.0972$ for the distortion. 

The column sums of $M^{-1}$ are easy to compute, since each column besides the first must have dot product 0 with the first row of $M$. This dot product is just $a$ multiplied by the sum of the entries of the column besides the first entry, which means that the sum of those entries is 0. Therefore, the sum of the $i$th column for $2 \leq i \leq m$, is just its first entry. Computing the sum of the first column separately by hand, we have that 
$$\mkern-36mu \vect{1}^\top M^{-1} = \frac{1}{(k + 1) - 2kc(1 - c)}\begin{bmatrix}
-(k + 1)\frac{b(1 - b)}{a(1 - a)} + \frac{(k + 1) - 2kc(1 - c)}{a} & \frac{(k + 1)(1 - b) - 2kc(1 - c)}{1 - a} & \frac{b((k + 1) - 2kc)}{1 - a} & \frac{2bc}{1 - a} & \cdots & \frac{2bc}{1 - a}
\end{bmatrix}.$$
The condition that these entries must be nonnegative imposes some additional constraints on $(a, b, c)$. We note that $(k + 1) - 2kc(1 - c) > 0$ since $2c(1 - c)\leq \frac12$. By ensuring that $b, c < \frac12$, we guarantee that $(k + 1)(1 - b) - 2kc(1 - c) > 0$ and $(k + 1) - 2kc > 0$. Finally, by ensuring that $1 - 2c(1 - c) > \frac{b(1 - b)}{(1 - a)}$, we have that the first column sum is positive.

With all the necessary conditions determined, we can move on to determining the distortion lower bound we get, as a function of $a, b, c$. Recall that $\beta = \vect{1}^\top M^{-1}(\vect{1} - \vect{{\plu}})$, which is
\begin{align*}
\beta(a, b, c, k) = \frac{1}{(k + 1) - 2kc(1 - c)}\left(\frac{-(k + 1)b(1 - b) + ((k + 1) - 2kc(1 - c))(1 - a)}{a}\right.\\
 \left. + \frac{(k + 1)(1 - b)^2 - 2kc(1 - c)(1 - b) + b((k + 1) - 2kc)(a + b)}{1 - a}\right)
\end{align*}
For each fixed value of $k = m - 3$, we can optimize the above function in the region where $a, b, c\in (0, 1)$, $a + b + c \geq 1$, $b, c < \frac12$, and $1 - 2c(1 - c) > \frac{b(1 - b)}{(1 - a)}$ to find the smallest possible $\beta$. This gives us a lower bound of $1 + 2/\beta$ for the distortion. Table~\ref{tb:distlbs} provides a list of these values for several choices of $m$. 

It is not difficult to check that the choices of $a,b,c$ in the table satisfy the required constraints. We note that the condition that $a + b + c \geq 1$ does not apply when $m = 3$. We can easily check the condition $1 - 2c(1 - c) > \frac{b(1 - b)}{(1 - a)}$ by noting that $2c(1 - c) + \frac{b(1 - b)}{(1 - a)}$ is a function that increases as $a, b, c$ increase (since $b, c < \frac12$). Since we always end up choosing $a \leq 0.48, b \leq 0.43$, and $c \leq 0.4$, we have $2c(1 - c) + \frac{b(1 - b)}{(1 - a)} \leq 0.96 < 1$ as needed.

\begin{table}
\centering
\begin{tabular}{ |c|c|c|c|c|c| }
$m$ & $a$ & $b$ & $c$ & $\beta$ & Distortion Lower Bound\\
\hline
3  & 0.473356  & 0.423961  & -  &  1.94907 & 2.02613
\\
4  & 0.459994  & 0.406749  & 0.363254  & 1.90554  & 2.04957
\\
5  & 0.452953  & 0.400474  & 0.373050  & 1.88106  & 2.06323
\\
6  & 0.448571  & 0.397287  & 0.378427  & 1.86556  &2.07206
 \\
7  & 0.445576  & 0.395377  & 0.381835  & 1.85490  &2.07822
 \\
8  & 0.443396  & 0.394112  & 0.38419  & 1.84712  & 2.08276
\\
9  & 0.441738  & 0.393215  & 0.385916  &  1.84120 & 2.08625
\\
10 & 0.440434  & 0.392546  & 0.387236  &  1.83654 & 2.08900
\\
\hline
50 & 0.431584  & 0.388789  & 0.395418  &  1.80493 & 2.10807
\\
100 & 0.430538  & 0.388426  & 0.396305  &  1.80121 & 2.11036
\\
1000 & 0.429608  & 0.388115  & 0.397081  &  1.79790 & 2.11241
\\
$\infty$ & 0.429505  & 0.388082  & 0.397166  &  1.79753 & 2.11264
\end{tabular}
\caption{\label{tb:distlbs} Distortion lower bounds for several values of $m$, including what choices of $a, b, c$ in the admissible region minimize $\beta$.}
\end{table}

Finally, as $m \to\infty$, the function $\beta(a, b, c, k)$ simplifies somewhat to the following.

$$\mkern-24mu\lim_{k \to \infty} \beta(a, b, c, k) =  \frac{1}{1 - 2c(1 - c)}\left(\frac{-b(1 - b) + (1 - 2c(1 - c))(1 - a)}{a}\right.\\
 \left. + \frac{(1 - b)^2 - 2c(1 - c)(1 - b) + b(1 - 2c)(a + b)}{1 - a}\right).$$
Like with $\beta(a,b,c,k)$ for fixed values of $m$, for the above function, it is not hard to determine the values of $a,b,c$ that minimize the function. These are included as well in Table~\ref{tb:distlbs}. Thus, we have established distortion lower bounds all $m \geq 3$, as claimed in Theorem~\ref{thm:LBs}. 

We note that it may seem natural to change more of the rows of $M$ with $\frac12$s to free variables, but numerical analysis suggests that this does not give any improvement. For the interested reader, in Appendix~\ref{sec:appendix}, we go through the case where $m = 3$. There, for variety, we work directly with the definition of distortion, rather than using the LP approach from Section~\ref{sec:reframing}.

\section{Matching upper bound for 3 candidates}\label{sec:3UB}

We will prove the following theorem.

\begin{theorem}
Any election with $3$ candidates has a randomized election mechanism that guarantees distortion at most $2.02613$.
\end{theorem}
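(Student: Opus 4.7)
The strategy is to work within the LP framework of Proposition~\ref{prop:LPequiv} and show that for $m=3$, the optimum $\beta$ of (\ref{eq:LP1}) is always at least the value $1.94907$ attained by the lower bound construction of Section~\ref{sec:LBs}, which yields the matching distortion bound $1 + 2/\beta \leq 2.02613$. The plan is to peel away progressively more constraints from (\ref{eq:LP1}) until only the $(0,1,2,3)$-metric constraints remain, and then argue that among all 3-candidate elections the lower bound family~(\ref{eq:MandPlu}) is the one that minimizes $\beta$.

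The first step is to introduce a broader class of \emph{biased metrics}, each tailored to a single target candidate $i$: $d(i,v)$ is made as small as possible for each voter while $d(j,v)$ for $j \neq i$ is pushed as large as possible subject to consistency with the rankings and the triangle inequality. I would prove that for any $d \in \mathcal{D}$, the constraint it imposes on (\ref{eq:LP1}) is implied by biased metric constraints targeting the minimizer of $\SC(\cdot,d)$, since such a replacement can only enlarge the coefficients of the $p_j$ on the left-hand side while not increasing $\min_j \SC(j,d)$ on the right. The second step, specific to $m=3$, further reduces biased metrics to \emph{generalized $(0,1,2,3)$-metrics}: with only three candidates, once a target $i$ is fixed a biased metric is essentially determined by the $\{0,1\}$-pattern of $d(i,v)$ over voters, and the resulting constraint can be expressed as a convex combination of constraints coming from a small number of pure $(0,1,2,3)$-style instances.

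A dichotomy then emerges at the LP optimum: either the only binding constraints are the three plain $(0,1,2,3)$-metric constraints, in which case the LP reduces to the finite system $M\vect{p} \leq \vect{1}-\vect{{\plu}}$ and $\beta = \vect{1}^\top M^{-1}(\vect{1}-\vect{{\plu}})$ as in Section~\ref{sec:LBs}, or some $p_i$ can be set to zero without decreasing $\sum p_j$, reducing to a 2-candidate instance where the mechanism of \cite{Kem20a} already achieves distortion $2 < 2.02613$. In the first case, it remains to show $\vect{1}^\top M^{-1}(\vect{1}-\vect{{\plu}}) \geq 1.94907$ for every 3-candidate election of this form. My plan is a \emph{fudging} argument: given such an election, I would construct a sequence of local perturbations to the statistics $\{s_{i \cg j}\}$ and $\{\plu(i)\}$ that only decrease $\beta$ while preserving nonnegativity of the column sums of $M^{-1}$, driving the election toward the three-permutation form $(1,3,2),(2,3,1),(3,2,1)$ with proportions $(a,b,1-a-b)$ used in Section~\ref{sec:LBs}.

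The hardest part will be verifying monotonicity of $\beta$ under these perturbations, since shifting a voter between permutations changes both $M$ and $\vect{{\plu}}$ in coupled ways. I expect to compute $\partial \beta / \partial s_{i \cg j}$ explicitly from the closed-form expression for $\vect{1}^\top M^{-1}$ and to case-split on which row of $M$ is being altered to show that each permitted perturbation decreases $\beta$. A secondary technical wrinkle is handling the boundary between the two cases of the dichotomy: the perturbations must be chosen so that they do not accidentally cross into the regime where one of the $(0,1,2,3)$-constraints becomes slack (which would invalidate the closed-form expression for $\beta$). Once the reduction is complete, the two-variable minimization of $\beta(a,b)$ carried out in Table~\ref{tb:distlbs} yields $\beta \geq 1.94907$, completing the proof.
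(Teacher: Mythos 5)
Your high-level architecture is essentially the paper's: pass from arbitrary metrics to biased metrics, then for $m=3$ to generalized $(0,1,2,3)$-metrics, then use a dichotomy (all $(0,1,2,3)$-constraints tight, or reduce to fewer candidates) to conclude that the LP optimum equals $\vect{1}^\top M^{-1}(\vect{1}-\vect{{\plu}})$, and finally minimize this quantity over all $3$-candidate elections. That skeleton matches Proposition~\ref{prop:suffbiased}, Lemma~\ref{lem:biasedtogen}, Lemma~\ref{lem:genreduction}, and Section~\ref{ssec:worst0123} one-for-one. (One small side remark: the reduction in Lemma~\ref{lem:biasedtogen} is a \emph{nonnegative} combination with weights $x_2$ and $x_3-x_2$, not a convex combination, and the natural parameter is the ordered vector $(x_1,x_2,x_3)$, not a $\{0,1\}$-pattern of $d(i,v)$; also the dichotomy needs a separate note about singularity of $M$, which forces $s_{i\succ j}\in\{0,1\}$ and hence is itself a reduction to fewer candidates.)

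The genuine gap is in your last step, and you have half-diagnosed it yourself. You propose to perturb the statistics $\{s_{i\succ j}\}$ and $\{\plu(i)\}$ and to verify monotonicity of $\beta = \vect{1}^\top M^{-1}(\vect{1}-\vect{{\plu}})$ by computing $\partial\beta/\partial s_{i\succ j}$ and case-splitting. Because your perturbations change both $M$ and $\vect{{\plu}}$ simultaneously (as you note), $\beta$ depends on the perturbation parameter through a ratio of polynomials via $M^{-1}$, and there is no reason its sign is controllable by a clean case-split; this step as stated does not close. The idea you are missing is the paper's: perturb the weight of a permutation $\sigma$ and its reverse $\sigma'$ \emph{simultaneously} (increase $w_{\sigma_1}, w_{\sigma_1'}$ by $\Delta$ while decreasing $w_{\sigma_2}, w_{\sigma_2'}$ by $\Delta$). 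This leaves every pairwise comparison $s_{i\succ j}$, and hence $M$, unchanged, and moves only $\vect{{\plu}}$; so $\beta$ is an affine function of $\Delta$, one sign of $\Delta$ necessarily does not increase $\beta$, and you push to the boundary where one of the four weights hits zero --- no derivative computation or case-split is needed for monotonicity. You should also correct the endpoint of the fudging: it does \emph{not} drive the election all the way to the $3$-permutation form $(1,3,2),(2,3,1),(3,2,1)$. It only reaches the state where no two distinct reverse pairs both have full positive support, which for $m=3$ leaves up to $4$ permutations with positive support. The paper then splits into three cases (up to relabeling) and minimizes a rational function of three variables in each, finding that the minimum is at the boundary and that only Case~I (which recovers the $(1,3,2),(2,3,1),(3,2,1)$ form) beats $2$; Cases~II and~III bottom out at $\beta=2$. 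Without the reverse-pair trick and without this remaining three-case optimization, your step~3 is incomplete.
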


The strategy is to argue that our example for $m = 3$ is the worst case. Since our examples are very particular, we take several steps to gradually restrict the elections (and metrics) that we need to consider, until we are left with our lower bound instance. 

\begin{enumerate}[label={(\arabic*)}]

	\item First, we show that all of the constraints in (\ref{eq:LP1}) are redundant, besides those that correspond to what we call \emph{biased metrics} (defined in Section~\ref{ssec:biasdefs}). Moreover, we show that when $m = 3$, these metrics reduce to an even more restricted class, which we call \emph{generalized $(0,1,2,3)$-metrics}.

	\item Next, we consider the LP given by (\ref{eq:LP1}) with only the generalized $(0,1,2,3)$-metrics. We argue that either the $(0,1,2,3)$-metrics are the only non-redundant constraints, or we can automatically get distortion at most 2.

	\item Finally, if the mechanism only has to be concerned about $(0,1,2,3)$-metrics, the distortion that it can achieve only depends on the comparisons matrix and plurality vector. We argue that in the worst case elections, these take the form given in (\ref{eq:MandPlu}). Since in the last section, we optimized over the worst examples of this form, this gives us the theorem.

\end{enumerate}

One could imagine a similar strategy working to show an upper bound for general $m$ that matches our lower bound. The two main sticking points are that the reduction from biased metrics to generalized $(0,1,2,3)$-metrics in step (1) does not work as easily for larger values of $m$, and the analysis of the worst case elections for $(0,1,2,3)$-metrics in step (3) involves a lot of casework and consideration of the extrema of multivariate functions, and this does not scale well as $m$ increases. On the bright side, step (2) does essentially work for general $m$ (rather than distortion at most $2$, it is distortion at most $2 + \gamma_{m - 1}$, per the notation in Theorem~\ref{thm:LBs}). We also show in Theorem~\ref{thm:0123UB} that the casework in step (3) can be averted without giving up much, by showing that if a mechanism only needs to be concerned about $(0,1,2,3)$-metrics, it can guarantee distortion at most $2.13968$ (leaving only a small gap with our asymptotic lower bounds). 

We will aim to prove each step in as much generality as is allowed by our techniques. While it may make some steps more involved, hopefully it is also more informative.

\subsection{Biased metrics and generalized $(0,1,2,3)$-metrics}\label{ssec:biasdefs}

We will define a set of metric spaces whose constraints are sufficient to consider in the LP (\ref{eq:LP1}). The rough idea is to show that if we have an arbitrary metric, then we can massage it into one of our \emph{biased} metrics so that $\min_j \SC(j)$ decreases, and $\SC(i) - \min_j \SC(j)$ increases. This means that the constraint imposed by the new metric is tighter than that imposed by the old metric, which means that the latter constraint is redundant. 

\begin{definition}
Let $(x_1, ..., x_m)$ be a vector of nonnegative real numbers such that at least one entry is 0. Given an election, the \emph{biased} metric for the vector $(x_1, ..., x_m)$ is defined as follows. For a voter $v$, define
$$y_v = \frac12 \max_{i, j: i \cgeq_v j} (x_i - x_j).$$
Then for each candidate $i$ and voter $v$, we have 
$$d(i, v) = y_v + \min_{j: i \cgeq_v j} x_j.$$
\end{definition}

We note that the $(0,1,2,3)$-metric $d_i$ corresponds to the vector that is $0$ at $i$ and $2$ elsewhere. The intuition behind these metrics is the following. Suppose we had an arbitrary metric so that $i^* = \arg\min_i\SC(i)$. If we were to throw out all of the distances besides the distances $d(i^*, i) =: x_i$ between $i^*$ and other candidates, and then (1) made the distances $d(i^*, v) =: y_v$ as small as possible without breaking any triangle inequality constraints, and (2) made the distances $d(i, v)$ for $i \neq i^*$ as large as possible without breaking any triangle inequality or ordinal constraints, then the result would be exactly the above metric. We call these metrics ``biased'' because they are designed to take a given metric, and then maximize the bias towards the candidate $i^*$.

It is not immediately clear that biased metrics satisfy the conditions required of a metric space, so we will prove this explicitly below. 

\begin{proposition}
For any vector $(x_1, ..., x_m)$ and any election, the biased metric for $(x_1, ..., x_m)$ is indeed a valid distance metric. 
\end{proposition}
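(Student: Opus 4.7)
The plan is to verify the 3-hop constraints from the preliminaries. Since the biased metric only explicitly defines candidate-voter distances $d(i,v)$ (with remaining pairs filled in by graph-distance closure), it suffices to show that
\[
d(j, v) \leq d(j, u) + d(i, u) + d(i, v) \quad \text{for all } i, j \in C, \ u, v \in V.
\]
Substituting the formula $d(i, v) = y_v + \phi_v(i)$, writing $\phi_w(k) := \min_{\ell : k \cgeq_w \ell} x_\ell$, the $y_v$ terms cancel from both sides and the goal reduces to
\[
\phi_v(j) \leq 2 y_u + \phi_u(j) + \phi_u(i) + \phi_v(i).
\]

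The key step is a short chain using only the definitions, which notably does not even touch the $i$-terms on the right-hand side. On one hand, $\phi_v(j) \leq x_j$ because $j$ itself lies in $\{\ell : j \cgeq_v \ell\}$. On the other hand, letting $j_u$ be an argmin attaining $\phi_u(j) = x_{j_u}$, we have $j \cgeq_u j_u$, so the definition of $y_u$ gives $x_j - x_{j_u} \leq 2 y_u$, i.e., $x_j \leq 2 y_u + \phi_u(j)$. Chaining these two bounds and using nonnegativity of $\phi_u(i)$ and $\phi_v(i)$ completes the argument. Nonnegativity of the distances themselves is immediate from $y_v, \phi_v(i) \geq 0$, and the maxes and mins in the definitions of $y_v$ and $\phi_v$ are well defined since $\{\ell : k \cgeq_v \ell\}$ always contains $k$.

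The natural but misleading instinct is a case analysis on the relative positions of the various argmin candidates across $u$'s and $v$'s rankings, which proliferates sub-cases that do not obviously close. Sidestepping that via the crude bound $\phi_v(j) \leq x_j$ is the only real trick, and so beyond the reduction to the 3-hop inequality I do not anticipate any significant obstacle.
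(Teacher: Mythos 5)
Your proof is correct and follows essentially the same route as the paper: after reducing to the 3-hop constraints (the paper handles arbitrary paths directly, but immediately discards everything except the first two edges and the last edge, which amounts to the same reduction), the key chain $\phi_v(j) \le x_j \le 2y_u + \phi_u(j)$ is precisely the paper's $d(i,v) \le y_v + x_i$ and $2y_u \ge x_i - x_\ell$, with the remaining terms dropped by nonnegativity. The observation that one should bound via $x_j$ rather than chase argmins across rankings is indeed the crux, and you identified it correctly.
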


\begin{proof}
Clearly, the distances we have defined are nonnegative, since the expression for $y_v$ allows for $i = j$, which means that $y_v$ is the maximum over a set that includes 0. Thus, it suffices to show that biased metrics satisfy the triangle inequality.

Recall that we view the partially defined metric as a weighted graph, and the complete metric is the graph distance metric on this graph. This means that it suffices to show that for any explicitly defined edge (i.e., between a candidate $i$ and a voter $v$), no path between the two endpoints can have smaller total weight than the defined weight of the edge. 

Suppose we have some path between a candidate $i$ and voter $v$. We will show that $d(i, v)$ is at most the total weight of the path. Note that $d(i, v) = y_v + \displaystyle\min_{j: i \cgeq_v j} x_j \leq y_v + x_i$. Now, suppose that the first two edges on the path are $(i, u)$ and $(u, j)$, and the last edge is $(k, v)$. Then the total length of the path is at least 
$$d(i, u) + d(u, j) + d(k, v) \geq d(i, u) + y_u + y_v.$$
Now, $d(i, u) = y_u + x_\ell$ for some $\ell$ for which $i \cgeq_u \ell$. By the definition of $y_u$, we also have $2y_u \geq x_i - x_\ell$. Putting all of these together, we have
$$d(i, u) + y_u + y_v \geq 2y_u + x_\ell + y_v \geq x_i + y_v \geq d(i, v)$$
as desired.
\end{proof}

Next, we will show that for the purpose of the LP (\ref{eq:LP1}), it suffices to only consider the constraints imposed by biased metrics.

\begin{proposition}\label{prop:suffbiased}
Suppose we have an election and a distance metric $d$ that is consistent with the election. Let $i^* = \arg\min_i\SC(i, d)$. Then there is a biased metric $\hat{d}$ such that $\SC(i^*, \hat{d}) \leq \SC(i^*, d)$ and  $\SC(i, \hat{d}) - \SC(i^*, \hat{d}) \geq \SC(i, d) - \SC(i^*, d)$ for each $i \neq i^*$.
\end{proposition}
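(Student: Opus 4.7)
The plan is to explicitly construct $\hat d$ as the biased metric built from the distances of the original metric $d$ between $i^*$ and the other candidates. Specifically, set $x_i := d(i^*, i)$ for each $i$, noting that $x_{i^*} = 0$, so the vector $(x_1,\ldots,x_m)$ is a valid input for defining a biased metric. Let $\hat d$ be the biased metric for this vector, with associated per-voter quantities $y_v = \tfrac12 \max_{i\cgeq_v j}(x_i - x_j)$ and $\hat d(i,v) = y_v + \min_{j: i\cgeq_v j} x_j$. I would first remark that $\hat d$ is consistent with the election, since $i\cg_v j$ implies $\{k : j\cgeq_v k\}\subseteq\{k : i\cgeq_v k\}$, which gives $\hat d(i,v)\leq \hat d(j,v)$.

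Next I would prove the two inequalities pointwise in $v$; averaging over voters then gives the statements about $\SC$. Note that $\hat d(i^*,v) = y_v$ because $x_{i^*}=0$ and $i^*\cgeq_v i^*$, so the first inequality $\SC(i^*,\hat d)\leq \SC(i^*,d)$ reduces to showing $y_v\leq d(i^*,v)$ for every voter $v$. For any $i\cgeq_v j$, applying the triangle inequality twice through $v$ gives $x_i \leq d(i^*,v)+d(v,i)$ and $-x_j \leq d(i^*,v)-d(v,j)$; adding these and using $d(v,i)\leq d(v,j)$ yields $x_i - x_j \leq 2d(i^*,v)$, hence $y_v \leq d(i^*,v)$, as required.

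For the second inequality, notice that for $i\neq i^*$, $\hat d(i,v) - \hat d(i^*,v) = \min_{j: i\cgeq_v j} x_j$, so it suffices to show $x_j \geq d(i,v)-d(i^*,v)$ for every $j$ with $i\cgeq_v j$. This follows directly from the triangle inequality: $x_j = d(i^*,j) \geq d(v,j) - d(i^*,v) \geq d(v,i) - d(i^*,v)$, where the last step uses $i\cgeq_v j$. Averaging this pointwise inequality over $v$ gives $\SC(i,\hat d) - \SC(i^*,\hat d) \geq \SC(i,d) - \SC(i^*,d)$, completing the proof.

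There is no real obstacle here: the construction is essentially forced by the intuition given in the definition (massage $d$ by shrinking $i^*$'s distances and expanding the others as much as the triangle inequality and the ordinal constraints allow), and the two estimates are one-line applications of the triangle inequality. The main thing to be careful about is keeping track of which direction each triangle inequality is being used, and noting that the minimum in the definition of $\hat d(i^*,v)$ simplifies because $x_{i^*}=0$.
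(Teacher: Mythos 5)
Your proof is correct and matches the paper's approach: same choice $x_i = d(i^*, i)$, same pointwise reduction to per-voter inequalities, and the same triangle-inequality manipulations (just reorganized as adding two one-sided estimates rather than chaining them). The brief remark verifying that $\hat d$ is consistent with the election is a useful addition that the paper leaves implicit.
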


\begin{proof}
As suggested earlier, let $x_i = d(i, i^*)$, and let $\hat{d}$ be the biased metric for $(x_1,x_2, ..., x_m)$. We will show that for any voter $v$, $\hat{d}(i^*, v) \leq d(i^*, v)$ and  $\hat{d}(i, v) - \hat{d}(i^*, v) \geq d(i, v) - d(i^*, v)$, which will immediately imply the proposition.

Fix $v$, and let $i$ and $j$ be such that $i \cgeq_v j$. Then we have
$$ d(i, i^*) \leq d(i, v) + d(i^*, v) \leq d(j, v) + d(i^*, v) \leq d(j, i^*) + 2d(i^*, v).$$
This implies that $d(i^*, v) \geq  \frac{d(i, i^*) - d(j, i^*)}{2} = \frac{x_i - x_j}{2}$. Taking the maximum over all choices of $i$ and $j$ such that $i \cgeq_v j$, we get 
$$d(i^*, v) \geq \frac{1}{2}\max_{i, j: i \cgeq_v j} (x_i - x_j) = y_v = \hat{d}(i^*, v)$$ 
as claimed. Next, fix a candidate $i \neq i^*$ and a voter $v$. Let $j = \displaystyle\arg \min_{j: i \cgeq_v j} x_j$, so that $\hat{d}(i, v) - \hat{d}(i^*, v) = x_j$. Then we have
$$d(i, v) \leq d(j, v) \leq d(j, i^*) + d(i^*, v) = x_j + d(i^*, v)$$
which means that 
$$ d(i, v) - d(i^*, v) \leq x_j = \hat{d}(i, v) - \hat{d}(i^*, v)$$
as claimed. This establishes the proposition.
\end{proof}

Next, we define a class of metrics that sits between biased metrics and $(0,1,2,3)$-metrics, called \emph{generalized $(0,1,2,3)$-metrics}. This class has the convenience that we can prove a kind of reduction from generalized $(0,1,2,3)$-metrics to $(0,1,2,3)$-metrics, and in the case where $m = 3$, we can prove a kind of reduction from biased metrics to generalized $(0,1,2,3)$-metrics.

\begin{definition}
Given an election, its \emph{generalized $(0,1,2,3)$-metrics} are the biased metrics for vectors $(x_1, ..., x_m)$ where for some nontrivial subset $I$ of the candidates, $x_i = 0$ if $i \in I$ and $x_j = 2$ for $j \notin I$. More concretely, the metric $d_I$ has distances as follows. If $i \in I$ 
$$d_I(i, v) = \begin{cases}
0 & v \in \plu(I)\\
1 & v \notin \plu(I)
\end{cases}.$$
Note that $d(i, v) = d(i', v)$ for any $i, i' \in I$. Then for each $j \notin I$ 
$$d_I(j, v) = d_I(i, v) + \begin{cases}
0 & v \notin S_{I \cg j}\\
2 & v \in S_{I \cg j}
\end{cases}$$
for any arbitrary $i \in I$.
\end{definition}

The generalized $(0,1,2,3)$-metrics are what we would get if we collapsed all the candidates of $I$ into one candidate who appears in each voter's ranking at the position where the worst candidate of $I$ was, and then we considered the $(0,1,2,3)$-metric in this new election on fewer candidates. The fact that generalized $(0,1,2,3)$-metrics are $(0,1,2,3)$-metrics in disguise makes it believable that we can reduce from the former to the latter. We prove this formally in the next subsection.

Before we switch to considering generalized $(0,1,2,3)$-metrics alone, we show that when $m = 3$, the biased metrics can be reduced to generalized $(0,1,2,3)$-metrics.

\begin{lemma}\label{lem:biasedtogen}
For any 3 candidate election, the constraints in \textnormal{(\ref{eq:LP1})} corresponding to generalized $(0,1,2,3)$-metrics imply the constraints for all biased metrics.
\end{lemma}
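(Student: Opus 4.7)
The plan is to show that every biased-metric LP constraint can be realized as a nonnegative combination of at most two generalized $(0,1,2,3)$-metric LP constraints. Given a biased metric $d$ with vector $(x_1,x_2,x_3)$, relabel candidates so that $x_1=0$, and by swapping the labels of $2$ and $3$ if necessary assume $x_2 \leq x_3$. Then one has the convex combination
$$(0, x_2, x_3) \;=\; \tfrac{x_3 - x_2}{2}\cdot (0,0,2) \;+\; \tfrac{x_2}{2}\cdot (0,2,2),$$
where both coefficients are nonnegative and the vectors on the right correspond to the generalized $(0,1,2,3)$-metrics $d_{\{1,2\}}$ and $d_{\{1\}}$ respectively. (If $x_2=0$ one term drops out; if $x_3=0$ the metric is trivial.)

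Next I would verify that the biased metric is a linear function of $(x_1,x_2,x_3)$ throughout any cone on which the ordering of the $x_i$'s is fixed. Concretely, for each voter $v$, the quantities $y_v = \frac{1}{2}\max_{i,j:i\cgeq_v j}(x_i - x_j)$ and $\min_{j:i\cgeq_v j}x_j$ are determined by which pair $(i,j)$ attains the max (respectively which index attains the min), and these choices do not change as $x$ varies within a cone. Both of the vectors $(0,0,2)$ and $(0,2,2)$ lie in the closure of the cone $0 = x_1 \leq x_2 \leq x_3$, so the linearity yields
$$d \;=\; \tfrac{x_3 - x_2}{2}\cdot d_{\{1,2\}} \;+\; \tfrac{x_2}{2}\cdot d_{\{1\}}.$$
A quick sanity check runs through the six voter rankings to confirm that on this cone the expressions for $y_v$ and $d(i,v)$ are indeed the same affine function of $(x_2,x_3)$ as one obtains by plugging in the two extreme vectors and interpolating.

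The third ingredient is that candidate $1$ is a minimizer of $\SC$ for all three metrics simultaneously. This is immediate from the definition: whenever $x_i=0$, we have $d(i,v)=y_v$ for every voter $v$, while $d(i',v) = y_v + \min_{j:i' \cgeq_v j} x_j \geq y_v$ for $i' \neq i$, so $\SC(i) \leq \SC(i')$. Applying this to $d$, $d_{\{1,2\}}$, and $d_{\{1\}}$, candidate $1$ realizes $\min_j \SC(j,\cdot)$ in each case. Since both the left- and right-hand sides of the LP constraint $\sum_i (\SC(i,d)-\min_j\SC(j,d))\,p_i \leq \min_j\SC(j,d)$ are linear in $d$ once the minimizer is held fixed, the constraint for $d$ is literally $\frac{x_3-x_2}{2}$ times the constraint for $d_{\{1,2\}}$ plus $\frac{x_2}{2}$ times the constraint for $d_{\{1\}}$; hence it is implied by the generalized $(0,1,2,3)$-metric constraints, as desired.

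The main obstacle is the linearity check in the second paragraph: one has to confirm that the piecewise-linear functions defining $y_v$ and $d(i,v)$ agree at the boundary vectors $(0,0,2)$ and $(0,2,2)$ with the interpolated value. This is mechanical but must be done carefully because the max/min defining $y_v$ changes form across cones, and it is precisely the restriction $m = 3$ that keeps the number of cones small enough to make the decomposition into two extreme vectors work; for larger $m$ one would need more generalized-$(0,1,2,3)$ vectors to span the relevant cone, which is why the lemma as stated is specific to $3$ candidates.
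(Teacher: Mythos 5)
Your proof is correct, and it takes a genuinely different route from the paper's. The paper proceeds by direct computation: it works out explicit formulas for $\SC(2) - \SC(1)$, $\SC(3) - \SC(1)$, and $\SC(1)$ in terms of $x_2$, $x_3$, and the preference proportions, and then verifies by inspection that the resulting inequality is the claimed nonnegative combination of the two generalized-$(0,1,2,3)$ constraints. You instead observe a structural fact: the biased-metric distances $d(i,v)$ are positively homogeneous, piecewise-linear in the vector $x$, hence genuinely linear on the closed cone $\{0 = x_1 \le x_2 \le x_3\}$ (where the argmax defining $y_v$ and the argmin defining $d(i,v)$ are determined by the voter's ranking and the ordering of the $x_j$). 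You also note that the minimizing candidate is $1$ simultaneously for $d$, $d_{\{1\}}$, and $d_{\{1,2\}}$, so the LP constraint itself is a linear function of $x$ on that cone. Writing $(0,x_2,x_3) = \tfrac{x_3-x_2}{2}(0,0,2) + \tfrac{x_2}{2}(0,2,2)$ as a nonnegative combination of the two extreme rays of the cone — which are exactly the generalized-$(0,1,2,3)$ vectors $d_{\{1,2\}}$ and $d_{\{1\}}$ — then yields the implication immediately.

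What your approach buys is a cleaner explanation of the coefficients (they are exactly the coordinates in the extreme-ray decomposition of the cone) and a transparent account of why the lemma is stated only for $m = 3$: the cone $0 = x_1 \le x_2 \le \cdots \le x_m$ has $m-1$ extreme rays, but for $m \ge 4$ not all of them are $\{0,2\}$-valued vectors, so the decomposition no longer lands entirely on generalized $(0,1,2,3)$-metrics. The paper's computational proof verifies the same identity but does not surface this geometry. The one place where you wave your hands — confirming that the piecewise-linear functions are actually given by a single affine formula on the whole closed cone (including its two boundary rays) — is indeed routine: the argmax/argmin are locally constant on the open cone, and continuity forces the boundary values to agree with the linear extension. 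You acknowledge this, and it is correct, but a reader would want you to actually run through the six rankings once, as the paper effectively does when it computes $y_v$ case by case.
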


\begin{proof}
First, we note that the generalized $(0,1,2,3)$-metric constraint for the set $I$ can be written in the following form:
$$2\sum_{j \notin I} s_{I \cg j} p_j \leq 1 - \plu(I).$$
This follows since $\SC(i) = 1 - \plu(I)$ for all $i \in I$, and, fixing some arbitrary $i \in I$, $d(j, v) - d(i, v)$ is 2 when $v \in S_{I \cg j}$, and 0 otherwise. To be concrete in the case where $m = 3$, these correspond to the following 6 constraints
$$ 2(s_{1 \cg 2}p_2 + s_{1 \cg 3}p_3) \leq 1 - \plu(1) \qquad\qquad  2(s_{2 \cg 1}p_1 + s_{2 \cg 3}p_3) \leq 1 - \plu(2) \qquad\qquad 2(s_{3 \cg 1}p_1 + s_{3 \cg 2})p_2 \leq 1 - \plu(2)$$
$$ 2\plu(2,3)p_1 \leq 1 - \plu(2,3) \qquad\qquad  2\plu(1,3)p_2 \leq 1 - \plu(1,3) \qquad\qquad 2\plu(1,2)p_3 \leq 1 - \plu(1,2).$$
Next, suppose that we have a biased metric $d$ corresponding to the vector $(x_1, x_2, x_3)$, and without loss of generality suppose that $0 = x_1 \leq x_2 \leq x_3$. We will show that the corresponding constraint is a (nonnegative) linear combination of the constraints 
$$ 2(s_{1 \cg 2}p_2 + s_{1 \cg 3}p_3) \leq 1 - \plu(1) \qquad \text{and} \qquad 2\plu(1,2)p_3 \leq 1 - \plu(1,2).$$
Recall that the biased metric constraint is given by 
\begin{equation}\label{eq:biascst}
(\SC(2) - \SC(1))p_2 + (\SC(3) - \SC(1))p_3 \leq \SC(1).
\end{equation}
By the definition of the biased metric, we have that $d(2, v) - d(1, v)$ is $0$ if $v \in S_{2 \cg 1}$, and is $x_2$ otherwise. This means that 
$$\SC(2) - \SC(1) = s_{1 \cg 2}x_2.$$
Similarly, $d(3, v) - d(1, v)$ is $0$ if $v \in S_{3 \cg 1}$, is $x_2$ if $v \in S_{1\cg 3 \cg 2}$, and is $x_3$ if $v \in S_{1,2\cg 3} = \plu(1,2)$. This means that 
$$\SC(3) - \SC(1) = s_{1\cg 3 \cg 2}x_2 + \plu(1,2)x_3 = s_{1 \cg 3}x_2 + \plu(1, 2)(x_3 - x_2).$$
Thus, the left side of (\ref{eq:biascst}) is 
$$x_2( s_{1 \cg 2}p_2 + s_{1 \cg 3}p_3) + (x_3 - x_2)\plu(1,2)p_3.$$
Next, we consider $\SC(1)$. We have that $d(1, v)$ is $ \frac12 x_3$ if $v \in S_{3 \cg 1}$, $\frac12 x_2$ if $v \in S_{2 \cg 1 \cg 3}$, $\frac12(x_3 - x_2)$ if $v \in S_{1 \cg 3 \cg 2}$, and 0 otherwise. It follows that 
\begin{align*}
\SC(1) &= \frac{x_3}{2} s_{3 \cg 1} + \frac{x_2}{2} s_{2 \cg 1 \cg 3} + \frac{x_3 - x_2}{2} s_{1 \cg 3 \cg 2}\\
&= \frac{x_2}{2} (s_{3 \cg 1} + s_{2 \cg 1 \cg 3}) + \frac{x_3 - x_2}{2}(s_{3 \cg 1} + s_{1 \cg 3 \cg 2})\\
&= \frac{x_2}{2} (1 - \plu(1)) + \frac{x_3 - x_2}{2}(1 - \plu(1, 2)).
\end{align*}  
Thus, (\ref{eq:biascst}) can be written as 
$$x_2( s_{1 \cg 2}p_2 + s_{1 \cg 3}p_3) + (x_3 - x_2)\plu(1,2)p_3 \leq \frac{x_2}{2} (1 - \plu(1)) + \frac{x_3 - x_2}{2}(1 - \plu(1, 2))$$
which is indeed a linear combination of the two claimed constraints, with nonnegative coefficients $x_2$ and $x_3 - x_2$.
\end{proof}

\subsection{Reducing from generalized $(0,1,2,3)$-metrics to $(0,1,2,3)$-metrics}\label{ssec:genreduc}
Say that a particular election mechanism \emph{optimizes over} a certain set of metrics, if it works by solving the LP given by (\ref{eq:LP1}), but only with the constraints corresponding to the special set of metrics, and then normalizing the optimal choice of $\vect{p}$. What we have shown so far is that the election mechanism that optimizes over the biased metrics is in fact optimal for every election instance, and the mechanism that optimizes over the generalized $(0,1,2,3)$-metrics is optimal for elections with 3 candidates.

The LP that optimizes over the generalized $(0,1,2,3)$-metrics has constraints that correspond to $(0,1,2,3)$-metrics, and constraints that do not. The constraints that are tight at the optimal solution determine what the optimal objective is. Next, we will show that either the tight constraints are exactly the $(0,1,2,3)$-metric constraints, or the optimal objective is high anyway. 

\begin{lemma}\label{lem:genreduction}
Over all elections with $m$ candidates, let $\beta_m$ be the minimum optimal objective function value attained by the LP that optimizes over generalized $(0,1,2,3)$-metrics. For a fixed election instance, at the optimum of this LP, either all the $(0,1,2,3)$-metric constraints are tight, or the optimum objective is at least $\beta_{m - 1}$.
\end{lemma}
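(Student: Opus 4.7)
The approach is to apply LP duality. The LP optimizing over generalized $(0,1,2,3)$-metrics is
\[
\max\ \sum_j p_j \quad \text{s.t.} \quad 2\sum_{j \notin I} s_{I \cg j} p_j \leq 1 - \plu(I) \text{ for each nontrivial } I \subset [m],\ p_j \geq 0,
\]
and introducing a dual variable $y_I \geq 0$ per constraint, its dual becomes $\min \sum_I (1 - \plu(I)) y_I$ subject to $\sum_{I \not\ni j} 2 s_{I \cg j} y_I \geq 1$ for every $j \in [m]$. Suppose the first branch of the lemma fails, so at some primal optimum $\vect{p}^*$ the constraint for a singleton $\{i\}$ is slack; complementary slackness then forces $y^*_{\{i\}} = 0$ at any accompanying dual optimum $\vect{y}^*$.

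Let $\mathcal{E}^{(i)}$ denote the restricted election on $[m]\setminus\{i\}$, obtained by deleting $i$ from every voter's ranking. Its LP has the same form on $m-1$ candidates, and the key step is to build a feasible dual solution $\vect{z}$ for it by setting
\[
z_{I'} \;:=\; y^*_{I'} + y^*_{I' \cup \{i\}} \qquad \text{for each nontrivial } I' \subset [m]\setminus\{i\}.
\]
Both indices on the right are nontrivial subsets of $[m]$, so $\vect{z}$ is well defined. What is left is to verify (a) that $\vect{z}$ is feasible for the restricted dual, and (b) that the dual objective of $\vect{z}$ in $\mathcal{E}^{(i)}$'s LP is at most that of $\vect{y}^*$ in the original.

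For (a), the plan is to use $s^{(i)}_{I' \cg j} = s_{I' \cg j} \geq s_{(I' \cup \{i\}) \cg j}$ (imposing the extra requirement $i \cg j$ can only shrink the voter set) and compare the LHS of each restricted dual constraint term-by-term against the LHS of the corresponding original dual constraint for $j \neq i$; the missing $I' = \emptyset$ contribution in the original expansion vanishes precisely because $y^*_{\{i\}} = 0$, which is exactly where slackness is used. For (b), the relation $\plu^{(i)}(I') = \plu(I') + \plu(I' \cup \{i\})$ lets me pair the original terms $(1 - \plu(I')) y^*_{I'}$ and $(1 - \plu(I' \cup \{i\})) y^*_{I' \cup \{i\}}$ and check that their sum exceeds the restricted term $(1 - \plu^{(i)}(I'))(y^*_{I'} + y^*_{I' \cup \{i\}})$ by the nonnegative amount $\plu(I') y^*_{I' \cup \{i\}} + \plu(I' \cup \{i\}) y^*_{I'}$, while the lone unpaired original term, corresponding to $I = [m]\setminus\{i\}$ (whose mate $[m]$ is not a valid dual index), contributes nonnegatively. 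Chaining (a) and (b) with weak duality yields $\beta_{m-1} \leq \mathcal{E}^{(i)}\text{-LP optimum} \leq \text{objective of } \vect{z} \leq \text{original dual optimum} = \text{original LP value}$. I expect no conceptual difficulty beyond careful bookkeeping of which subsets $I \subseteq [m]$ are valid dual indices in each LP, in particular the corner cases $I = \{i\}$ (valid in the original, annihilated by slackness) and $I = [m]\setminus\{i\}$ (valid in the original, unpaired on the restricted side).
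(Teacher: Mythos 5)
Your proof takes a genuinely different route from the paper's. The paper argues on the primal side: it takes an optimal solution $\vect{p}^*$ for the restricted election on $[m]\setminus\{i\}$, lifts it to the original LP by setting $p_i = 0$, and shows feasibility directly via the monotonicity inequalities $s_{I \cg j} \leq s_{I\setminus i \cg j}$ and $\plu(I) \leq \hat{\plu}(I\setminus i)$ (together with the slackness of the $\{i\}$ constraint, which handles the one constraint with no counterpart in the restricted LP). Your argument dualizes this: rather than lifting a restricted primal solution up, you project the original dual optimum down by summing the dual variables for $I'$ and $I'\cup\{i\}$, and conclude via weak duality. Both are sound strategies, and it is worth noting that your direction of travel mirrors the paper's exactly (complementary slackness plays the role that constraint slackness plays in the primal story); the paper's route is a bit lighter because it avoids writing out the dual and handling the two corner-case indices $\{i\}$ and $[m]\setminus\{i\}$.

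There is one concrete slip you should fix. The identity $\plu^{(i)}(I') = \plu(I') + \plu(I'\cup\{i\})$ is \emph{not} correct as an equality of proportions. The correct statement is one about sets, $\plu^{(i)}(I') = \plu(I') \cup \plu(I'\cup\{i\})$, and the two sets on the right can overlap (a voter who ranks all of $I'$ first, then $i$, then everyone else, lies in both). So in proportions you only get $\plu^{(i)}(I') \leq \plu(I') + \plu(I'\cup\{i\})$, which goes the wrong way for your ``exceeds by $\plu(I')\,y^*_{I'\cup\{i\}} + \plu(I'\cup\{i\})\,y^*_{I'}$'' computation. Fortunately your conclusion still holds for a simpler reason: since $\plu(I') \subseteq \plu^{(i)}(I')$ and $\plu(I'\cup\{i\}) \subseteq \plu^{(i)}(I')$, you have $1 - \plu^{(i)}(I') \leq 1 - \plu(I')$ and $1 - \plu^{(i)}(I') \leq 1 - \plu(I'\cup\{i\})$, and multiplying the first by $y^*_{I'}$, the second by $y^*_{I'\cup\{i\}}$, and adding gives the desired per-pair inequality directly. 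With that repair your argument goes through.
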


This lemma effectively means that if we have reduced the problem to considering generalized $(0,1,2,3)$-metrics, then we can actually just consider $(0,1,2,3)$-metrics instead. Another interpretation is that if we construct a lower bound example using generalized $(0,1,2,3)$-metrics, then we can construct an equally good lower bound example with just $(0,1,2,3)$-metrics (by applying this lemma iteratively, and eventually finding the tight constraints that correspond to $(0,1,2,3)$-metrics). 

In the special case where $m = 3$, the lemma tells us that for a given election instance, either the $(0,1,2,3)$-metric constraints are tight, or we can guarantee distortion 2, since we know that we can guarantee distortion 2 in the two candidate case.

\begin{proof}
Recall that the LP that optimizes over the generalized $(0,1,2,3)$-metrics is the following.
\begin{align*}
\text{maximize} \quad &\sum_{i = 1}^m p_i & \\ 
\qquad\qquad\qquad\qquad\text{subject to}\quad  \sum_{j\notin I}  s_{I \cg j} p_j &\leq 1 - \plu(I),  &\forall I\subseteq [m]; I\neq \varnothing, [m]\\
p_i& \geq 0, &1 \leq i \leq m 
\end{align*}
As a reminder the $(0,1,2,3)$-constraints are those that correspond to singleton sets $I = \{i\}$. 

If, for a given election instance, not all of the $(0,1,2,3)$-constraints are tight, then this means that there is at least one such constraint that we can remove without changing the optimum objective of the LP. Suppose one such constraint corresponds to $I = \{i\}$. Now, consider the election obtained by erasing $i$ from each voter's ranking list, and the corresponding LP that optimizes over the generalized $(0,1,2,3)$-metrics. By the definition of $\beta_{m-1}$, this LP has a feasible solution $\vect{p}^*$ that attains objective at least $\beta_{m - 1}$. We claim that the solution that sets $p_i = 0$ and $p_j = p_j^*$ for $j \neq i$ is a feasible solution to the original LP. Since this solution has objective at least $\beta_{m - 1}$, this would complete the proof.

We compare the constraint in the original LP corresponding to $I$ to the constraint in the new LP corresponding to $I \setminus i$ (unchanged if $i \notin I$). Let $\hat{\plu}(I \setminus i)$ denote the fraction of voters that rank the candidates in $I \setminus i$ above all others in the election where $i$ is erased. Then clearly
$$s_{I \cg j} \leq s_{I\setminus i \cg j} \qquad \text{and} \qquad \plu(I) \leq \hat{\plu}(I\setminus i)$$
since if a voter $v$ prefers all the candidates of $I$ over $j$, then they prefer the candidates of $I\setminus i$ over j, and if a voter ranks $I$ above all other candidates, then they rank $I\setminus i$ above all other candidates when $i$ is erased. Since we fix $p_i = 0$, this means that the new constraint is stricter than the old constraint, which means that the claimed solution is indeed a feasible solution for the old LP.
\end{proof}

\subsection{Worst elections for $(0,1,2,3)$-metrics}\label{ssec:worst0123}

Lemma~\ref{lem:genreduction} tells us that it suffices to argue that when the $(0,1,2,3)$-metric constraints are tight, the objective function is sufficiently large. Recall that the $(0,1,2,3)$-metric constraints correspond to the inequality $M\vect{p} \leq \vect{1} - \vect{{\plu}}$, so for these constraints to be tight, we have equality. Now, since we are working in the case when $m = 3$, $M$ can be written in the following form: 
$$M = \begin{bmatrix}
0 & x & 1 - y\\
1 - x & 0 & z\\
y & 1 - z & 0
\end{bmatrix}.$$
Thus, $\det(M) = xyz + (1 - x)(1 - y)(1 - z)$, which, since $x,y,z \in [0, 1]$ means that $M$ is singular only if one of $x,y,z$ is 0, and another is 1. We can assume that this is not the case, since this would mean that $s_{i \cg j} = 1$ for some $i, j \in \{1,2,3\}$, and so we can set $p_j = 0$, remove $j$, and reduce to the two candidate case. This means that we can assume that $M$ is invertible.

It follows that $\vect{p} = M^{-1}( \vect{1} - \vect{{\plu}})$, and so $\sum_i p_i = \vect{1}^\top M^{-1}( \vect{1} - \vect{{\plu}})$. Our goal is to identify the election instance that makes this as small as possible. 

We note that the expression $\vect{1}^\top M^{-1}( \vect{1} - \vect{{\plu}})$ only depends on the comparisons matrix and plurality vector of an election. The idea is to massage a given election instance in a way that only decreases the value of the expression, until $M$ and $\vect{{\plu}}$ are in the form of (\ref{eq:MandPlu}). Since we optimized examples of this form to obtain our lower bound, we will be done.

Suppose we have a fixed election instance. Let $w_\sigma$ denote the proportion of voters whose ranking list is $\sigma$, and let $\sigma'$ denote the reverse permutation of $\sigma$. Suppose that for two permutations $\sigma_1$ and $\sigma_2$, we modify the election instance by increasing $w_{\sigma_1}$ and $w_{\sigma_1'}$ by some $\Delta > 0$ and decreasing $w_{\sigma_2}$ and $w_{\sigma_2'}$ by $\Delta$. Then this does not change $M$, and it increases $\vect{{\plu}}$ for the first and last elements of $\sigma_1$, and decreases it for the first and last elements of $\sigma_2$ (two of these cancel out when $m = 3$). It is possible that this could increase $\vect{1}^\top M^{-1}( \vect{1} - \vect{{\plu}})$, but if instead we decrease $w_{\sigma_1}$ and $w_{\sigma_1'}$ and increase $w_{\sigma_2}$ and $w_{\sigma_2'}$, exactly the opposite effect would happen. Thus, it follows that if there are permutations $\sigma_1, \sigma_1'$ and $\sigma_2, \sigma_2'$ that are distinct and have positive support in an election, then there is a way to modify the election so that $\vect{1}^\top M^{-1}( \vect{1} - \vect{{\plu}})$ decreases, and the support of one of these four becomes 0. Henceforth, we can just consider elections where no such quadruple of permutations all have positive support.

If we consider the different subsets of permutations that may have positive support, there are 12 possibilities. However, without loss of generality we can consider two possibilities the same if they are equivalent up to permutation of the candidates. With this consideration, it suffices to consider only the following three possibilities. The permutations with positive support are highlighted in red.


$${123} \longleftrightarrow \red{321}  \qquad\qquad {123} \longleftrightarrow \red{321} \qquad\qquad \red{123} \longleftrightarrow {321}$$
$$\red{132} \longleftrightarrow \red{231} \qquad \qquad \red{132} \longleftrightarrow \red{231} \qquad \qquad \red{132} \longleftrightarrow \red{231} $$
$${213} \longleftrightarrow \red{312} \qquad\qquad \red{213} \longleftrightarrow {312} \qquad\qquad \red{213} \longleftrightarrow 312$$

We will split into three different cases, one for each possibility above. The three cases will all work in essentially the same way. By considering which permutations have nonzero support, we can determine the entries of $\vect{{\plu}}$ in terms of the entries of $M$. Since $M$ only has 3
variables, we will be able to express $\vect{1}^\top M^{-1}( \vect{1} - \vect{{\plu}})$ as a  function of 3 variables. This is sufficiently simple that we can explicitly optimize the expression using standard techniques. Strictly speaking, this solves the problem without directly having to reduce to the form of (\ref{eq:MandPlu}). Nevertheless, in all three cases, the local optimum ends up being suboptimal, and the expression is minimized at the boundary conditions. These boundary conditions eliminate a variable, and in all but one case, the result corresponds exactly to the form in (\ref{eq:MandPlu}) (possibly with some relabeling of the candidates). Thus, we do reduce to the form of (\ref{eq:MandPlu}), if indirectly. \\

\textbf{Case I.} $w_{123} = w_{213} = 0$.

In this case, the nonzero elements are $w_{321}, w_{132}, w_{231}, w_{312}$. We then have that $w_{132} = \plu(1) = s_{1 \cg 3} = 1 - y$ and $w_{231} = \plu(2) = s_{2 \cg 3} = z$. Therefore,
$$\vect{{\plu}} = \begin{bmatrix}
1 - y \\
z \\
y - z
\end{bmatrix}.$$
With this we can compute that,  
$$ \vect{1}^\top M^{-1}( \vect{1} - \vect{{\plu}}) = 
\frac{x^2 (-y+z+1)-x (y-z-1) (y+z-2)+y (2 y-3)+(z-1) z+2}{xyz + (1 - x)(1 - y)(1 - z)}$$
which we would like to minimize when $x, y, z \in [0, 1]$, $x + y \geq 1$, $x + z \leq 1$, and $y \geq z$. These conditions come from the fact that $\plu(i)$ is at least all the off diagonal entries of $M$ on the $i$th row, and $\plu(i)$ is nonnegative. Taking the partial derivatives and setting them to zero, we find one local minimum at $x = 1/2$, $y = 3 - \sqrt{6}$, and $z = -2 + \sqrt{6}$. The function at this point is $-\frac12 + \sqrt{6} \approx 1.9495$, which is just larger than what we can get at the boundary, which is approximately $1.9491$ (see the first row of Table~\ref{tb:distlbs}). \\


\textbf{Case II.} $w_{123} = w_{312} = 0$.

In this case, the nonzero elements are $w_{321}, w_{132}, w_{231}, w_{213}$. We then have that $w_{132} = \plu(1) = s_{1 \cg 2} = x$ and $w_{231} + w_{213} = \plu(2) = s_{2 \cg 3} = z$. Therefore,
$$\vect{{\plu}} = \begin{bmatrix}
x \\
z \\
1 - (x + z)
\end{bmatrix}.$$
With this we can compute that,  
$$ \vect{1}^\top M^{-1}( \vect{1} - \vect{{\plu}}) = 
\frac{x^2 (y+z-1)+x^3+x (-y z+z-1)-(y (z-1)-2 z) (y+z)-2 y-3 z+2}{xyz + (1 - x)(1 - y)(1 - z)}$$
which we would like to minimize when $x, y, z \in [0, 1]$, $x + y \leq 1$, $x + z \leq 1$, and $x + y + z \geq 1$. Similar to Case I, taking partial derivatives we find a local minimum at $x=y=z=\frac13$, where the expression above is 2. Thus, the global minimum is again at the boundary conditions.\\

\textbf{Case III.} $w_{321} = w_{312} = 0$.

In this case, the nonzero elements are $w_{123}, w_{132}, w_{231}, w_{213}$. We then have that $w_{123} + w_{132} = \plu(1) = s_{1 \cg 2} = x$ and $w_{231} + w_{213} = \plu(2) = s_{2 \cg 1} = 1 - x$. Therefore,
$$\vect{{\plu}} = \begin{bmatrix}
x \\
1 - x \\
0
\end{bmatrix}.$$
With this we can compute that,  
$$ \vect{1}^\top M^{-1}( \vect{1} - \vect{{\plu}}) = 
\frac{x^2 (y-z+2)+x ((y-1) y-(z-3) z-3)+y (z-1)+(z-2) z+2}{xyz + (1 - x)(1 - y)(1 - z)}$$
which we would like to minimize when $x, y, z \in [0, 1]$, $x + y \leq 1$ and $x + z \geq 1$. Unlike the other cases, there is no local minimum in this region, and fixing the boundary conditions does not quite reduce to the desired form. However, we can find that even after fixing $y = 1 - x$ or $z = 1 - x$, there is still no local minimum in the region, so we must set $y = z = 1 - x$. At this point, the function reduces to just $\frac{1}{x(1 - x)} - 2$, which attains its minimum of 2 at $x = \frac12$. Thus, the minimum of the function in the region is 2 at $x=y=z=\frac12$.\\

Thus, we can finally conclude that for $3$ candidate elections, our lower bound instances are indeed the worst case, and that for any such election, there is a randomized election mechanism that guarantees distortion at most $2.02613$.

\section{Nearly matching upper bound for $(0,1,2,3)$-metrics}\label{sec:0123UB}

In this section, we will prove the following theorem. 

\begin{theorem}\label{thm:0123UB}
The LP that optimizes over the $(0,1,2,3)$-metrics always has optimal objective at least $1.75487$. Equivalently, under the assumption that the underlying metric is a $(0,1,2,3)$-metric, there is an election mechanism that guarantees distortion $2.13968$.
\end{theorem}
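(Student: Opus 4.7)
The plan is to prove Theorem~\ref{thm:0123UB} via LP duality combined with algebraic manipulation of the $(0,1,2,3)$-metric constraints. By Proposition~\ref{prop:LPequiv} and Lemma~\ref{lem:0123lb}, it suffices to show that for every election the LP
\[
\max\{\vect{1}^\top\vect{p} : \vect{p} \geq 0,\ M\vect{p} \leq \vect{1} - \vect{{\plu}}\}
\]
has optimum at least $\beta^* := 1.75487$; equivalently, by LP duality, every $\vect{y}\geq\vect{0}$ with $M^\top\vect{y}\geq\vect{1}$ satisfies $(\vect{1} - \vect{{\plu}})^\top\vect{y}\geq \beta^*$. (As a sanity check on the primal side, $\vect{p}=\vect{{\plu}}$ is automatically feasible because $(M\vect{{\plu}})_i = \sum_{j\neq i} s_{i\cg j}\plu(j) \leq 1-\plu(i)$, giving the trivial bound $\sum p_i\geq 1$; the task is to boost this to $\beta^*$.)

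The workhorse first-order inequality comes from combining the dual constraint with two structural facts: the tournament identity $s_{j\cg i} = 1 - s_{i\cg j}$ (from $M+M^\top = J-I$) and the plurality monotonicity $s_{i\cg j}\geq \plu(i)$ for $j\neq i$. Writing $T := \vect{1}^\top\vect{y}$, these sandwich $\sum_{j\neq i}s_{i\cg j}y_j$ between $\plu(i)(T-y_i)$ and $T-y_i-1$, yielding the coordinatewise bound $y_i \leq T - 1/(1-\plu(i))$. Integrating against $\plu(i)$ gives $(\vect{1}-\vect{{\plu}})^\top\vect{y}\geq \sum_i \plu(i)/(1-\plu(i))$, which by Jensen on the convex map $x\mapsto x/(1-x)$ is at least $m/(m-1)$ — tending to $1$ as $m\to\infty$ and falling well short of $\beta^*$.

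To close the gap, I would derive a quadratic companion inequality by weighting the dual constraint by $y_i$ and summing. The identity $s_{j\cg i}+s_{i\cg j}=1$ (for $i\neq j$) telescopes the resulting double sum to $\tfrac12(T^2-\|\vect{y}\|_2^2)$, giving $\|\vect{y}\|_2^2\leq T(T-2)$ and in particular $T\geq 2$. Combining this quadratic bound with the pointwise bound on $y_i$, with a Cauchy--Schwarz estimate on $\vect{{\plu}}^\top\vect{y}$, and with the normalization $\sum_i\plu(i)=1$ reduces the problem to a finite-dimensional optimization in a handful of scalar summaries of $(\vect{y},\vect{{\plu}})$ — most naturally $T$, $\sum_i\plu(i)/(1-\plu(i))$, and $\|\vect{{\plu}}\|_2$. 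A short KKT/Lagrangian calculation on this reduced program should identify a closed-form extremum equal to $\beta^*$, and Proposition~\ref{prop:LPequiv} then converts the primal bound $\sum p_i \geq \beta^*$ into the distortion bound $1 + 2/\beta^* = 2.13968$.

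The principal obstacle is this bridging step: neither the first-order bound ($m/(m-1)\to 1$) nor the quadratic bound ($T\geq 2$, independent of $\vect{{\plu}}$) is individually strong enough, and they must be combined in a way that preserves the interaction between the plurality vector and the comparisons matrix. Finding the correct nonnegative combination of the dual constraints — equivalently, the correct primal correction $\vect{q}$ so that $\vect{p} = \vect{{\plu}} + \vect{q}$ is feasible with $\vect{1}^\top\vect{q}$ maximal — so that the residual extremal problem is tight at the particular numerical constant $\beta^*$ is where the bulk of the technical work lies, and it is what replaces the fudging argument of Section~\ref{sec:3UB} which does not scale to large $m$.
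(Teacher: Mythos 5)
Your dual reformulation is sound and, once finished, is essentially a mirror image of the paper's primal argument---but the writeup stops short of a proof precisely where the paper makes its two decisive moves, and the ``bridging'' you flag as the main obstacle is in fact a clean one-line finish you have all the pieces for.

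The paper splits into two independent $\|\vect{\plu}\|$-dependent bounds and takes the pointwise maximum. First, it proves a quadratic lemma: if $\vect{p}\geq 0$ satisfies $M\vect{p}=\vect{1}-\vect{x}$, then expanding $\vect{p}^\top(M+M^\top)\vect{p}=\vect{p}^\top(J-I)\vect{p}$ and applying Cauchy--Schwarz to $\vect{p}^\top\vect{x}$ gives $\sum_i p_i\geq 1+\sqrt{1-\|\vect{x}\|^2}$. Second, it observes that $p_i=\plu(i)/(1-\plu(i))$ is feasible and, by Jensen applied to $t\mapsto 1/(1-t)$ \emph{with weights $\plu(i)$} (not uniform weights), its objective is at least $1/(1-\|\vect{\plu}\|^2)$. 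Minimizing $\max\bigl(1+\sqrt{1-\|\vect{\plu}\|^2},\ 1/(1-\|\vect{\plu}\|^2)\bigr)$ over $\|\vect{\plu}\|^2$ gives $\approx 1.75487$.

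Your two inequalities are exactly the dual shadows of these two ingredients. Your quadratic bound $\|\vect{y}\|^2\leq T(T-2)$, combined with $\vect{\plu}^\top\vect{y}\leq\|\vect{\plu}\|\sqrt{T(T-2)}$ and a one-variable minimization over $T\geq 2$ (at $T^*=1+1/\sqrt{1-\|\vect{\plu}\|^2}$), gives $(\vect{1}-\vect{\plu})^\top\vect{y}\geq 1+\sqrt{1-\|\vect{\plu}\|^2}$ directly---so the quadratic bound is not ``independent of $\vect{\plu}$'' once you feed it through Cauchy--Schwarz, as you intended to do but did not carry out. Your pointwise bound $y_i\leq T-1/(1-\plu(i))$ gives $(\vect{1}-\vect{\plu})^\top\vect{y}\geq\sum_i\plu(i)/(1-\plu(i))$, and here you apply Jensen with the wrong weights: uniform weights give only $m/(m-1)$, but weighting by $\plu(i)$ gives $1/(1-\|\vect{\plu}\|^2)$, matching the paper. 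Taking the maximum of the two resulting lower bounds and minimizing over $\|\vect{\plu}\|^2\in(0,1]$ closes the proof with no residual multivariate optimization; the speculated ``interaction'' between the two bounds is just a pointwise max.

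Worth noting: your dual formulation quietly improves on the paper's exposition in one respect. Because the dual constraint is an inequality $M^\top\vect{y}\geq\vect{1}$, both of your derivations go through without needing $M\vect{p}=\vect{1}-\vect{\plu}$ to hold with equality, so you avoid the paper's preliminary reduction (the Lemma~\ref{lem:genreduction}-style argument that either all $(0,1,2,3)$-constraints are tight or one reduces to fewer candidates). That is a genuine simplification, and it is a good reason to prefer the dual route once the finish is in place.
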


Here are two helpful ways to interpret this theorem. First, it shows that the election instances of the form in (\ref{eq:MandPlu}) are just about the best possible for constructing lower bound instances with $(0,1,2,3)$-metrics. Second, it shows that if one can prove reductions similar to those done in Sections~\ref{ssec:biasdefs} and~\ref{ssec:genreduc} for cases beyond $m = 3$, then we can recover distortion upper bounds that are close to optimal, without going into the types of casework heavy arguments in Section~\ref{ssec:worst0123}.

Recall that the LP that optimizes over the $(0,1,2,3)$-metrics is of the following form.
\begin{align*}
\qquad\qquad\qquad\qquad\text{maximize} \quad &\sum_{i = 1}^m p_i & \label{eq:LP0123} \tag{B}\\
\text{subject to}\quad  M\vect{p} &\leq \ones - \vect{{\plu}},  \\
\vect{p}& \geq 0.
\end{align*}
We note that like with our lower bounds, we drop the factor of 2 that would appear if we strictly followed (\ref{eq:LP1}), and so an optimal objective of $\beta$ translates to distortion $1 + 2/\beta$.

By the same argument as in Lemma~\ref{lem:genreduction}, it suffices to argue that when the main constraints ($M\vect{p} \leq \ones - \vect{{\plu}}$) are tight, the optimal objective is sufficiently large (otherwise, we can set $p_i = 0$ if the constraint for the metric $d_i$ is not tight, and reduce to a case where $m$ is smaller). In particular, we want to argue that if $\vect{p}$ is a vector such that $M\vect{p} = \ones - \vect{{\plu}}$, then $\sum_i p_i$ is large. 

We start with the following lemma. All norms are the $L^2$ norm. 

\begin{lemma}\label{lem:clevermatrix}
Let $M$ be a matrix with nonnegative values such that $M + M^\top = J - I$, and let $\vect{x}$ be a nonnegative vector with $\|\vect{x}\| \leq 1$. Let $\vect{p}$ be a nonnegative vector such that $M\vect{p} = 1 - \vect{x}$. Then 
$$\sum_{i} p_i \geq 1 + \sqrt{1 - \|\vect{x}\|^2}.$$
\end{lemma}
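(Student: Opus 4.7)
The plan is to reduce the bound to a one-variable optimization using the quadratic form $\vect{p}^\top M \vect{p}$. First, assume $S := \sum_i p_i > 0$ (the degenerate $\vect{p} = \vect{0}$ case forces $\vect{x} = \ones$, which is ruled out by $\|\vect{x}\| \le 1$ unless $n \le 1$) and normalize $\vect{q} := \vect{p}/S$, giving $\ones^\top \vect{q} = 1$, $\vect{q} \ge \vect{0}$, and $M\vect{q} = (\ones - \vect{x})/S$.

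The key step is to evaluate $\vect{q}^\top M \vect{q}$ two ways. On one hand, since it is a scalar it equals its own transpose, so $2\vect{q}^\top M \vect{q} = \vect{q}^\top(M + M^\top)\vect{q} = \vect{q}^\top (J - I)\vect{q} = 1 - \|\vect{q}\|^2$, using $\ones^\top\vect{q} = 1$. On the other hand, $\vect{q}^\top M \vect{q} = \vect{q}^\top(\ones - \vect{x})/S = (1 - \vect{q}^\top \vect{x})/S$. Equating the two expressions yields the central identity
\[
S \;=\; \frac{2(1 - \vect{q}^\top \vect{x})}{1 - \|\vect{q}\|^2}.
\]
Applying Cauchy--Schwarz, $\vect{q}^\top \vect{x} \le \|\vect{q}\|\|\vect{x}\|$, we obtain $S \ge g(\|\vect{q}\|)$ where $g(t) := 2(1 - t\|\vect{x}\|)/(1 - t^2)$ for $t \in (0, 1)$.

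It then remains to minimize $g$ over $t$. Setting $g'(t) = 0$ gives the quadratic $\|\vect{x}\| t^2 - 2t + \|\vect{x}\| = 0$, whose root in $(0,1)$ is $t^* = (1 - \sqrt{1 - \|\vect{x}\|^2})/\|\vect{x}\|$; a short computation using $1 - t^*\|\vect{x}\| = \sqrt{1 - \|\vect{x}\|^2}$ and the factorization $1 - (t^*)^2 = (1-t^*)(1+t^*)$ yields $g(t^*) = 1 + \sqrt{1 - \|\vect{x}\|^2}$, which one verifies is the global minimum by noting $g(0) = 2$ and $g(t) \to \infty$ as $t \to 1^-$ (for $\|\vect{x}\| < 1$). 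The main obstacle will be handling the edge cases where the central identity degenerates to $0/0$: if $\|\vect{q}\| = 1$, i.e., $\vect{p}$ is concentrated on a single coordinate $i$, then the identity forces $\vect{q}^\top \vect{x} = 1$, so $\vect{x} = \vect{e}_i$ and $\|\vect{x}\| = 1$, and the claim reduces to $S \ge 1$, which follows directly by reading off the equations $S M_{ji} = 1$ in $M\vect{p} = \ones - \vect{x}$ and using $M_{ji} \le 1$. The remaining case $\|\vect{x}\| = 1$ with $\|\vect{q}\| < 1$ reduces $g$ to $2/(1+t)$, whose infimum of $1$ again matches the claimed bound.
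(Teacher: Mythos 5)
Your proof is correct and follows essentially the same route as the paper's: both evaluate the quadratic form $\vect{p}^\top(M+M^\top)\vect{p}$ using $M+M^\top = J - I$, reduce via Cauchy--Schwarz to a one-variable minimization of $\frac{1-ct}{1-t^2}$ with $c=\|\vect{x}\|$ and $t = \|\vect{p}\|/\sum_i p_i$, and extract the minimum at $t^*=(1-\sqrt{1-c^2})/c$. Your normalization to $\vect{q}=\vect{p}/S$ is a cosmetic reformulation of the paper's substitution $\|\vect{p}\|=\lambda s$, and your explicit handling of the $\|\vect{q}\|=1$ and $\|\vect{x}\|=1$ edge cases is a small addition the paper glosses over.
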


The expected use case here is when $\vect{x}$ is the plurality vector. This more general version has the interesting corollary that when $\vect{x} = \vect{0}$, it tells us that under the assumption that the underlying metric is a $(1,3)$-metric (which we defined as a precursor to $(0,1,2,3)$-metrics), then distortion $2$ is possible. This means that the slight modification we made to go from $(1,3)$-metrics to $(0,1,2,3)$-metrics was necessary, and that this is the crux of why we are able to squeeze out better lower bounds.

\begin{proof}
Let $s = \sum_{i} p_i.$ We have 
\begin{align*}
M + M^\top &= J - I \\
\implies \vect{p}^\top(M + M^\top)\vect{p} &= \vect{p}^\top J \vect{p} - \vect{p}^\top \vect{p} \\
\implies 2\vect{p}^\top(1 - \vect{x}) &= s^2 - \|\vect{p}\|^2\\
\implies \|\vect{p}\|^2 - 2\vect{p}^\top \vect{x} &=  s^2 - 2s .
\end{align*}
By the Cauchy-Schwarz inequality, we have that $\vect{p}^\top \vect{x} \leq \|\vect{p}\| \cdot \|\vect{x}\|$. This means that $s^2 - 2s \geq \|\vect{p}\|^2 - 2\|\vect{p}\| \cdot \|\vect{x}\|$. On the other hand, $0 \leq \|\vect{p}\| \leq s$, so we can write $\|\vect{p}\| = \lambda s$ for some $\lambda \in [0, 1]$. From this we get that 
$$s^2 - 2s \geq \lambda^2 s^2 - 2\lambda s \|\vect{x}\| \implies s \geq 2\frac{1 - \lambda \|\vect{x}\|}{1 - \lambda^2}.$$
Now, we can think of $\|\vect{x}\|$ as some constant $c$. Given this, we can find the minimum value of the function $f(\lambda) = \frac{1 - c\lambda}{1 - \lambda^2}$ to lower bound the expression in the expression above. 

It is easy to check that $f$ is convex on $[0, 1]$ since $c \leq 1$, and so it is minimized where the derivative is zero. Since $f'(\lambda) = \frac{2\lambda - c(\lambda^2 + 1)}{(1 - \lambda^2)^2}$, this happens when $2\lambda = c(\lambda^2 + 1)$. Solving for $\lambda$, we get $\lambda = \frac{1 - \sqrt{1 - c^2}}{c}$. Substituting back into $f$, we get 
$$\frac{1 - c\lambda}{1 - \lambda^2} =  \frac{1 - (1 -  \sqrt{1 - c^2})}{1 - \left(\frac{1 - \sqrt{1 - c^2}}{c}\right)^2}  = \frac{c^2\sqrt{1 - c^2}}{c^2 - (1  -\sqrt{1 - c^2})^2} = \frac{c^2\sqrt{1 - c^2}}{2(\sqrt{1 - c^2} - (1 - c^2))}$$
$$= \frac{c^2}{2(1 - \sqrt{1 - c^2})} = \frac12(1 + \sqrt{1 - c^2}).$$
And so we get 
$$s \geq  1 + \sqrt{1 - c^2}$$
as claimed.
\end{proof}

Next, we note that the solution that sets $p_i = \frac{\plu(i)}{ 1- \plu(i)}$ is feasible for (\ref{eq:LP0123}), since 
$$\sum_{j \neq i} s_{i \cg j}\cdot \frac{\plu(j)}{ 1- \plu(j)} \leq \sum_{j \neq i} \plu(j) = 1 - \plu(i).$$
Here, we use the fact that $s_{i \cg j} \leq 1- \plu(j)$. This solution corresponds to the \textsc{SmartDictatorship}  mechanism studied by \cite{GHS20}. Since $f(x) = \frac{1}{1 - x}$ is convex on $(0,1)$, by Jensen's inequality with weights $\plu(i)$, we get 
\begin{equation}\label{eq:jensenapp}
\sum_{i} \frac{\plu(i)}{ 1- \plu(i)} \geq \frac{1}{1 - \|\vect{{\plu}}\|^2}.
\end{equation}
We have exhibited a feasible solution with objective function at least $\frac{1}{1 - \|\vect{{\plu}}\|^2}$, so in conjunction with Lemma~\ref{lem:clevermatrix} with $\vect{x} = \vect{{\plu}}$, we get that the optimal objective of (\ref{eq:LP0123}) is at least  
$$\max\left(1 + \sqrt{1 - \|\vect{{\plu}}\|^2}, \frac{1}{1 - \|\vect{{\plu}}\|^2} \right).$$
Letting $y = 1 - \|\vect{{\plu}}\|^2$, we see that the function $1 + \sqrt{y}$ is $1$ at $0$ and increases, whereas $\frac1y$ is $\infty$ at 0 and decreases. This means that the maximum of the two functions is minimal at their intersection point, which is at the real root of $y^3 - y^2 + 2y - 1$ at approximately $0.56984$. At this point, we get an optimal objective of at least $1.75487$, which corresponds to distortion $2.13968$, as claimed in Theorem~\ref{thm:0123UB}. \\

We note that a very slightly better bound can be obtained by proving a slightly tighter inequality than (\ref{eq:jensenapp}). Specifically, one can show that 
$$ \sum_{i} \frac{\plu(i)}{ 1- \plu(i)} \geq 3\|\plu\|^2 + \frac12.$$
This can be proven with a careful application of Lagrange multipliers, and shows that the optimal objective is at least $\frac{5 + \sqrt{31}}{6} \approx 1.76129$, and so the optimal distortion is at most $2.13553$. We omit the tedious details since the improvement is marginal.

The remaining slack in the argument appears to be in the application of the Cauchy-Schwarz inequality, since equality occurs when $\vect{p}$ and $\vect{{\plu}}$ are scalar multiples of each other, which is not the case in our lower bound examples. This seems hard to tighten without getting into the messy details of the weights of different ranking permutations like we did in Section~\ref{ssec:worst0123}.

\section{Discussion}\label{sec:dis}

Though we have proven new lower bounds for the worst-case metric distortion of any randomized election mechanism, and upper bounds for special cases, the problem of improving upper bounds for arbitrary metric spaces when $m \geq 4$ remains open. We conjecture that our lower bounds are optimal.

\begin{conjecture}\label{conj:UBs}
For each $m \geq 4$, there is a randomized election mechanism that guarantees distortion $2 + \gamma_m$ where $\gamma_4 \leq 0.04957$, $\gamma_5 \leq 0.06323$, up to $\displaystyle\lim_{m \to \infty} \gamma_m \leq 0.11264$.
\end{conjecture}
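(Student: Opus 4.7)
The plan is to extend the three-step reduction of Section~\ref{sec:3UB} from $m=3$ to arbitrary $m$. Step~1 — reducing arbitrary metrics to biased metrics via Proposition~\ref{prop:suffbiased} — already works for all $m$, and the inductive pattern of Lemma~\ref{lem:genreduction} reduces the generalized $(0,1,2,3)$ LP to the $(0,1,2,3)$ LP at the cost of an $m\to m{-}1$ induction. The two genuinely open steps are therefore (i)~reducing biased-metric constraints to generalized $(0,1,2,3)$ constraints for $m\geq 4$, and (ii)~tightening the $(0,1,2,3)$ LP bound from $2.13968$ (Theorem~\ref{thm:0123UB}) to the conjectured $2+\gamma_m$.

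For step~(i), I would first attempt the natural level-set decomposition: for a biased metric with vector $(x_1,\dots,x_m)$ and sorted distinct values $0=t_0<t_1<\cdots<t_K$, weight the generalized $(0,1,2,3)$ constraint for $J_\ell=\{i:x_i<t_\ell\}$ by $t_\ell-t_{\ell-1}$. This reproduces Lemma~\ref{lem:biasedtogen} when $m=3$, because in that case one can verify $y_v = \tfrac{1}{2}\sum_\ell (t_\ell-t_{\ell-1})\,\mathbf{1}[v\notin\plu(J_\ell)]$ for every ranking, and the analogous identity holds for $\min_{k:j\succeq_v k} x_k$. For $m\geq 4$ the identity for $y_v$ fails: two different inversions $(i,j)$ and $(i',j')$ with $i\succeq_v j$, $i'\succeq_v j'$ can each certify $v\notin\plu(J_t)$ at different thresholds, while $y_v=\tfrac{1}{2}\max(x_i-x_j)$ picks up only the single worst pair. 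For example, at $m=4$ with $\vect{x}=(0,1,2,3)$ and ranking $(3,1,4,2)$, the true $y_v$ equals $1$ while the threshold sum equals $3/2$, so the biased constraint is strictly stronger than the decomposition. I would try to repair this by enlarging the family of subsets beyond threshold sets and choosing coefficients via a matching-style assignment that picks, per voter, a single certifying pair. If no such global decomposition exists, one could instead prove the reduction only for extremal elections, by showing that offending biased constraints are automatically implied by the $(0,1,2,3)$ constraints on any election that would violate the conjecture.

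For step~(ii), I would combine two refinements. First, strengthen Lemma~\ref{lem:clevermatrix}: its Cauchy--Schwarz step is tight only when $\vect{p}\propto\vect{{\plu}}$, a configuration excluded by our lower bound construction, so replacing it with an inequality that exploits both $M+M^\top = J-I$ and the feasibility system $M\vect{p}=\vect{1}-\vect{{\plu}}$ should yield a sharper lower bound on $\sum_i p_i$. Second, generalize the fudging argument of Section~\ref{ssec:worst0123}: show that any election minimizing $\vect{1}^\top M^{-1}(\vect{1}-\vect{{\plu}})$ among those whose $(0,1,2,3)$-LP has all $m$ constraints tight can be perturbed via mass-swaps on reversal pairs $(\sigma_1,\sigma_1')\leftrightarrow(\sigma_2,\sigma_2')$, which preserve $M$, until it assumes the canonical form~(\ref{eq:MandPlu}), at which point the explicit computation of Section~\ref{sec:LBs} applies.

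The main obstacle is step~(i). The failure of the threshold decomposition for $m\geq 4$ reveals that biased-metric constraints are genuinely stronger than nonnegative combinations of generalized $(0,1,2,3)$ ones on a generic election, so any proof must either find a new decomposing family or abandon the stepwise reduction altogether. A promising alternative is to bound the distortion of one of the two mechanisms proposed in Section~\ref{sec:dis}: since they are motivated by the LP formulation and by the belief that $(0,1,2,3)$-metrics capture the binding obstructions, it may be possible to certify them using the biased-metric LP directly, without first reducing the constraints. I expect the bulk of the technical work to lie here, and in identifying the structural invariant that forces the binding constraints at the conjectured optimum to be exactly the $m$ $(0,1,2,3)$-metric constraints used in our lower bound.
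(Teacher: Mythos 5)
The statement you are addressing is Conjecture~\ref{conj:UBs}, which the paper explicitly leaves open; there is no proof in the paper to compare your attempt against. What the paper offers on this front is the $m=3$ resolution of Section~\ref{sec:3UB}, the $2.13968$ bound for $(0,1,2,3)$-metrics of Theorem~\ref{thm:0123UB}, and the two candidate mechanisms of Section~\ref{sec:dis} conjectured (but not proven) to attain the stated bounds. Your plan correctly tracks the two obstructions the paper itself names: that the reduction from biased metrics to generalized $(0,1,2,3)$-metrics ``does not work as easily'' for $m\geq 4$, and that the casework of Section~\ref{ssec:worst0123} does not scale. Moreover, your counterexample to the natural threshold decomposition is correct and actually sharpens the paper's informal remark: for $\vect{x}=(0,1,2,3)$ and the ranking $(3,1,4,2)$, one computes $y_v=\tfrac12\max_{i\cgeq_v j}(x_i-x_j)=\tfrac12\cdot 2=1$, while the threshold sum $\tfrac12\sum_\ell(t_\ell-t_{\ell-1})\,\mathbf{1}[v\notin\plu(J_\ell)]=\tfrac32$ since the voter lies outside $\plu(\{1\})$, $\plu(\{1,2\})$, and $\plu(\{1,2,3\})$. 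Thus the weighted right-hand side overcounts $\SC(1)$ per voter, and the generalized $(0,1,2,3)$ constraints fail to imply the biased one by this combination. That is a genuinely informative observation beyond what is in the paper.

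Nonetheless, the proposal is a research plan and not a proof, and the gaps you yourself flag are real. For step~(i) no alternative decomposing family is produced: the ``matching-style assignment that picks, per voter, a single certifying pair'' is not defined, and it is not obvious that such an assignment can be made consistently across voters while keeping the left-hand coefficients $\sum_\ell w_\ell\cdot 2 s_{J_\ell\cg j}$ at least $\SC(j)-\SC(1)$ for every $j$ — the overcount on the right and the potential undercount on the left are coupled through the same weights $w_\ell$, so fixing one can break the other. For step~(ii), the strengthening of Lemma~\ref{lem:clevermatrix} is stated as a direction but not carried out, and the swap argument of Section~\ref{ssec:worst0123} is tied to enumerating reversal-pair support patterns, whose count grows with $m!$; there is no stated invariant that tames this for $m\geq 4$. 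Until one of these steps, or the alternative route of directly bounding the distortion of the mechanisms solving LP~(\ref{eq:LP2}) or~(\ref{eq:LP0123}), is made rigorous, Conjecture~\ref{conj:UBs} remains open and your submission should be read as a well-motivated plan of attack rather than a proof.
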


A less ambitious conjecture would be that there exists a positive constant $\eps$ such that there is a randomized election mechanism that guarantees distortion $3 - \eps$. 

Our linear programming approach suggests some possible ways of attempting to prove Conjecture~\ref{conj:UBs}, or a weaker version of it. We propose two new randomized voting mechanisms based on this approach, which we conjecture achieve the distortion claimed in Conjecture~\ref{conj:UBs}.

The first mechanism leverages Proposition~\ref{prop:suffbiased}, which tells us that in LP (\ref{eq:LP1}), it suffices to consider only the constraints that correspond to biased metrics. 

Suppose we have a biased metric $d$ corresponding to the vector $(x_1, x_2, ..., x_m)$, and without loss of generality suppose that $0 = x_1 \leq x_2 \leq \cdots \leq x_m$. Let $\Delta_i = x_{i + 1} - x_i$ The constraint corresponding to this metric is 
\begin{equation}\label{eq:biasedconstriant}
 \sum_{i \neq 1} (\SC(i) - \SC(1))p_i \leq \SC(1).
\end{equation}
By a similar computation to one in the proof of Lemma~\ref{lem:biasedtogen}, 
\begin{align*}
\SC(i) - \SC(1) &= x_1 s_{i \cg 1} + x_2 s_{1 \cg i \cg 2} + x_3s_{1,2 \cg i \cg 3} + \cdots + x_i s_{1,2,...,i-1 \cg i}\\
&= \Delta_1 s_{1\cg i} + \Delta_2 s_{1,2\cg i} + \cdots + \Delta_{i - 1}s_{1,2,...,i-1 \cg i}.
\end{align*}
$\SC(1)$ is somewhat more difficult to write down precisely, so we will settle for a lower bound. If $v \in S_{i \cg 1}$, $d(1, v) \geq  \frac12 x_i$. By picking the $i$ that makes this as large as possible for $v$, we get 
$$2\SC(1) \geq \sum_{i \neq 1} x_i s_{i \cg 0 \cg i + 1, ..., m} = \Delta_1(1 - s_{1 \cg 2,3,..,m}) + \Delta_2(1 - s_{1 \cg 3,4,...,m}) + \cdots + \Delta_{m - 1}(1 - s_{1 \cg m}).$$
This means that the constraint (\ref{eq:biasedconstriant}) is implied by the constraints
$$\sum_{j > k} s_{1,2,..., k \cg j}p_j \leq \frac12(1 - s_{1 \cg k + 1,k + 1,...,m})$$
for each $1 \leq k < m$. Considering all such constraints (and dropping the factor of $\frac12$ which can be absorbed into the variables, we can create the following LP.
\begin{align*}
\qquad\qquad\qquad\qquad\qquad\qquad\text{maximize} \quad &\sum_{i = 1}^m p_i & \label{eq:LP2} \tag{C}\\
\text{subject to}\quad  \sum_{j \notin I} s_{I \cg j} p_i &\leq  1 - \max_{i \in I} s_{i \cg I^c},  &\forall I \subseteq [m], I\neq \varnothing, [m]\\
p_i& \geq 0, &1 \leq i \leq m 
\end{align*}

Since the constraints in the above LP (after adding back the factor of $\frac12$) imply the constraints in (\ref{eq:LP1}), we have the following proposition.

\begin{proposition}\label{prop:lp2works}
For a given election, if the LP \textnormal{(\ref{eq:LP2})} has optimal objective at least $\beta$, then the optimal election mechanism achieves distortion at most $1 + \frac{2}{\beta}$.
\end{proposition}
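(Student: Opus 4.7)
The plan is to combine Proposition~\ref{prop:suffbiased} (biased metric constraints alone suffice in LP (\ref{eq:LP1})), the identities derived immediately before the statement (which decompose each biased metric constraint into a $\Delta$-weighted sum of ``prefix'' inequalities), and Proposition~\ref{prop:LPequiv} (LP objective $\leftrightarrow$ distortion bound). Together these should show that any feasible $\vect{p}$ for LP (\ref{eq:LP2}) with $\sum_i p_i = \beta$ yields, after halving, a feasible solution for LP (\ref{eq:LP1}) with objective $\beta/2$, whence the distortion bound $1 + \frac{2}{\beta}$ follows.

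In detail, let $\vect{p}$ be feasible for LP (\ref{eq:LP2}) with $\sum_i p_i = \beta$, and set $\vect{p}' = \vect{p}/2$. To verify $\vect{p}'$ is feasible for LP (\ref{eq:LP1}), by Proposition~\ref{prop:suffbiased} it suffices to check the constraint for an arbitrary biased metric $d$ with vector $(x_1,\ldots,x_m)$. Relabel candidates so that $0 = x_1 \leq x_2 \leq \cdots \leq x_m$, let $\Delta_k = x_{k+1}-x_k$, and let $I_k = \{1,\ldots,k\}$. Since $x_1 = 0$ implies $d(1,v) = y_v \leq d(i,v)$ for every $i,v$, we have $\min_j \SC(j,d) = \SC(1,d)$, and the computations in the text give
\[
\SC(i,d)-\SC(1,d) = \sum_{k<i}\Delta_k\, s_{I_k \cg i}, \qquad 2\,\SC(1,d) \geq \sum_{k=1}^{m-1}\Delta_k\bigl(1-s_{1 \cg I_k^c}\bigr).
\]
Substituting these into the biased metric constraint $\sum_{i\neq 1}(\SC(i,d)-\SC(1,d))p'_i \leq \SC(1,d)$ and swapping the order of summation, the left side becomes $\sum_k \Delta_k \sum_{j\notin I_k} s_{I_k\cg j}\, p'_j$. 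Since each $\Delta_k \geq 0$, it then suffices to establish, for each $k$, the ``prefix'' inequality
\[
\sum_{j\notin I_k} s_{I_k \cg j}\, p'_j \leq \tfrac12\bigl(1 - s_{1 \cg I_k^c}\bigr),
\]
which follows by halving the LP (\ref{eq:LP2}) constraint for $I_k$ and using $\max_{i \in I_k} s_{i \cg I_k^c} \geq s_{1 \cg I_k^c}$ (as $1 \in I_k$).

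With $\vect{p}'$ feasible for LP (\ref{eq:LP1}) and $\sum_i p'_i = \beta/2$, Proposition~\ref{prop:LPequiv} produces a mechanism achieving distortion at most $1 + \tfrac{1}{\beta/2} = 1 + \tfrac{2}{\beta}$. The only mildly subtle point --- and the reason LP (\ref{eq:LP2}) is defined with a maximum over $i \in I$ on the right-hand side rather than any fixed element --- is that different biased metrics induce different relabelings, hence different choices of ``designated minimum'' candidate $i^* \in I$, so the constraint must be tight enough to cover every possible $i^*$ simultaneously; absent this maximum, the derivation above would only apply to biased metrics whose smallest coordinate sits at a prespecified index.
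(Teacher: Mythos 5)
Your proof is correct and follows the same route as the paper: it invokes Proposition~\ref{prop:suffbiased}, decomposes each biased-metric constraint into a $\Delta$-weighted sum of prefix inequalities, matches those against the halved LP~(\ref{eq:LP2}) constraints, and closes with Proposition~\ref{prop:LPequiv}. Your parenthetical about why the right-hand side must take a max over $i \in I$ (different biased metrics designate different minimum-index candidates within the same prefix set) is a correct and useful clarification of a point the paper leaves implicit.
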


We note that this proposition is enough to show us that the solution that assigns $p_i = \frac{\plu(i)}{1 - \plu(i)}$ (the \textsc{SmartDictatorship} mechanism considered by \cite{GHS20}) gets distortion $3 - 2\|\vect{{\plu}}\|^2 \leq 3 - 2/m$ since
$$\sum_{j \notin I} s_{I \cg j} \frac{\plu(j)}{1 - \plu(j)} \leq \sum_{j \notin I} \plu(j) \leq 1 - \sum_{i \in I} \plu(i)\leq  1 - \max_{i \in I} s_{i \cg I^c}$$
and we already showed that $\sum_i \frac{\plu(i)}{1 - \plu(i)} \geq \frac{1}{1 - \|\vect{{\plu}}\|^2}$.

Even though (\ref{eq:LP2}) is stronger than (\ref{eq:LP1}), we believe (based on some numerical evidence for small choices of $m$) that it could be sufficient to consider (\ref{eq:LP2}). In particular, we make the following conjecture.

\begin{conjecture}\label{conj:biased}
The election mechanism that solves the LP \textnormal{(\ref{eq:LP2})} and gives candidate $i$ weight $p_i/\sum_j p_j$ achieves the distortion demanded by Conjecture~\ref{conj:UBs}.
\end{conjecture}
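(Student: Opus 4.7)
The plan is to extend the three-step argument that proved the upper bound for $m = 3$ in Section~\ref{sec:3UB} to arbitrary $m$. Step 1 (reducing from arbitrary metrics to biased metrics) is effectively packaged into LP~(\ref{eq:LP2}): its constraints were derived precisely so as to imply every biased-metric constraint, and Proposition~\ref{prop:suffbiased} already reduced the original LP~(\ref{eq:LP1}) to biased metrics. So I only need to handle the analogs of steps 2 and 3 of Section~\ref{sec:3UB}.

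For step 2, I would extend Lemma~\ref{lem:genreduction} to LP~(\ref{eq:LP2}). The singleton-$I$ constraints of LP~(\ref{eq:LP2}) reduce to $\sum_{j \neq i} s_{i \cg j} p_j \leq 1 - \plu(i)$, which are exactly the $(0,1,2,3)$-metric constraints. Thus if any singleton constraint is slack at the optimum, the same erase-a-candidate argument as in Lemma~\ref{lem:genreduction} lets me set the corresponding $p_i = 0$ and induct on $m - 1$ candidates. I would need to verify that erasing a candidate also weakens every $|I| \geq 2$ constraint of LP~(\ref{eq:LP2}), which should follow because both $s_{I \cg j}$ and $\max_{i \in I} s_{i \cg I^c}$ move monotonically under erasure. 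This reduces the problem to the case where $M\vect{p} = \vect{1} - \vect{\plu}$ is forced at the optimum, pinning down $\vect{p} = M^{-1}(\vect{1} - \vect{\plu})$ (with an analogous $p_i = 0$ reduction when $M$ is singular, paralleling the $m = 3$ argument in Section~\ref{ssec:worst0123}).

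Step 3 is where the real work lies. I would need to show two things: (a) the vector $\vect{p} = M^{-1}(\vect{1} - \vect{\plu})$ is actually feasible for every $|I| \geq 2$ constraint of LP~(\ref{eq:LP2}); and (b) $\sum_i p_i = \vect{1}^\top M^{-1}(\vect{1} - \vect{\plu}) \geq \beta_m$, where $\beta_m$ is the value attained by the lower-bound construction of~(\ref{eq:MandPlu}). Part (b) is an optimization over all generalized tournament matrices paired with consistent plurality vectors, which for $m = 3$ was solved via casework over the supports of the ranking permutations. To scale beyond $m = 3$, I would replace the casework with a variational argument: start from any extremizing election and exhibit a local move on its ranking-weight distribution that (i) preserves the singleton constraints being tight, (ii) preserves feasibility of the remaining constraints of LP~(\ref{eq:LP2}), and (iii) strictly decreases $\vect{1}^\top M^{-1}(\vect{1} - \vect{\plu})$, driving the weights toward rankings matching the block structure of~(\ref{eq:MandPlu}). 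A Sherman--Morrison expansion of $M^{-1}$ after a rank-one perturbation should make the sign of (iii) tractable.

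The main obstacle is part (a), which has no analog in the $m = 3$ argument: there, Lemma~\ref{lem:biasedtogen} expressed every biased-metric constraint as a nonnegative combination of generalized $(0,1,2,3)$-metric constraints, so once the latter were tight the former held automatically. For $m \geq 4$ this implication fails in general, so I would need a direct proof that $\sum_{j \notin I} s_{I \cg j} p_j \leq 1 - \max_{i \in I} s_{i \cg I^c}$ when $\vect{p} = M^{-1}(\vect{1} - \vect{\plu})$, which appears to require a structural identity for $M^{-1}$ applied to vectors in the cone spanned by plurality-like patterns. A likely cleaner route is dual-based: exhibit explicit nonnegative multipliers on the constraints of LP~(\ref{eq:LP2}) that reproduce the row-weighting $\vect{1}^\top M^{-1}$ used in Section~\ref{sec:LBs}, directly certifying an objective bound of $\beta_m$ without any feasibility check. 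Either approach will be delicate, and numerical evidence of the kind the authors mention is probably indispensable for guessing the correct form of the dual multipliers.
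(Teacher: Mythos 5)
The statement you are asked about is Conjecture~\ref{conj:biased}: the paper does \emph{not} prove it. It is presented as an open conjecture supported only by Proposition~\ref{prop:lp2works} (that LP~(\ref{eq:LP2}) yields a valid mechanism with distortion $1 + 2/\beta$) and by numerical evidence for small $m$; the only fully proved upper bound beyond the $3 - 2/m$ baseline is the $m = 3$ case in Section~\ref{sec:3UB}, which goes through generalized $(0,1,2,3)$-metrics rather than through LP~(\ref{eq:LP2}). So there is no proof in the paper to compare yours against, and a complete proof of this conjecture would in fact resolve the main open problem the paper leaves.

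As a plan, your outline resembles what the authors hint at, but step 2 breaks as stated. You claim that erasing a candidate $i$ tightens every constraint of LP~(\ref{eq:LP2}) so that the reduction from Lemma~\ref{lem:genreduction} carries over; in fact, for a set $I$ with $i \in I$ the right-hand side $1 - \max_{k \in I} s_{k \succ I^c}$ \emph{loosens} when $i$ is erased, because the max is taken over the strictly smaller set $I\setminus i$ (whereas $I^c$ is unchanged, so the surviving $s_{k \succ I^c}$ do not change). Meanwhile the left-hand coefficients $s_{I \succ j}$ increase to $s_{I\setminus i \succ j}$. The net effect is ambiguous, so the padded solution $(p_i = 0,\ p_j = p_j^*)$ constructed from the reduced election need not be feasible for the original LP~(\ref{eq:LP2}); the monotonicity you invoke runs the wrong way on the RHS, unlike in Lemma~\ref{lem:genreduction} where the RHS $1 - \plu(I)$ tightens under erasure because $\plu(I) \leq \hat\plu(I \setminus i)$. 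This is exactly the kind of structural obstacle that makes LP~(\ref{eq:LP2}) harder to analyze than the generalized $(0,1,2,3)$-metric LP, and it needs to be repaired before steps 3(a)/(b) even become the relevant bottleneck. (Relatedly, in the branch where all singleton constraints are tight at the LP optimum, feasibility of $M^{-1}(\vect 1 - \vect{\plu})$ is automatic --- it is the LP optimum --- so part (a) as you phrase it is not the missing piece; the missing piece is that the ``erase a slack singleton'' induction that would put you in that branch does not go through.)
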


By Proposition~\ref{prop:lp2works}, to prove this conjecture it suffices to show that \textnormal{(\ref{eq:LP2})} has sufficiently large optimal objective (to get distortion $3 - \Omega(1)$, we need to show it has optimal objective $1 + \Omega(1)$). 

Our second election mechanism solves the LP that optimizes over the $(0,1,2,3)$-metrics, given by (\ref{eq:LP0123}). This LP has the opposite issue of (\ref{eq:LP2}); while we know by Theorem~\ref{thm:0123UB} that (\ref{eq:LP0123}) has somewhat large optimal objective, we do not have that the constraints in this LP imply the constraints of the original LP (\ref{eq:LP1}). In fact, we have examples where some constraints of (\ref{eq:LP1}) are not implied by the constraints of (\ref{eq:LP0123}), but this just means that (\ref{eq:LP0123}) is not instance optimal. In these examples, (\ref{eq:LP0123}) still gives us a sufficiently good solution. 

\begin{conjecture}\label{conj:0123}
The election mechanism that solves the LP \textnormal{(\ref{eq:LP0123})} and gives candidate $i$ weight $p_i/\sum_j p_j$ achieves the distortion demanded by Conjecture~\ref{conj:UBs}.
\end{conjecture}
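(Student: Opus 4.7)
Fix an election and let $\vect{p}^*$ be the optimum of~(\ref{eq:LP0123}) with value $\beta^*$; define the mechanism by $q_i = p_i^*/\beta^*$. To prove Conjecture~\ref{conj:0123} we must show that for every metric $d$ consistent with the election,
\[
\sum_i q_i \SC(i,d) \;\le\; (2+\gamma_m)\min_j \SC(j,d),
\]
or equivalently $\sum_i (\SC(i,d)-\min_j\SC(j,d))\,p_i^* \le (1+\gamma_m)\,\beta^* \cdot \min_j\SC(j,d)$. By Proposition~\ref{prop:suffbiased} it suffices to treat biased metrics $d$. I split the task into two subgoals: first, $\beta^* \ge 2/(1+\gamma_m)$; second, $\vect{p}^*$ satisfies every biased-metric-derived constraint of LP~(\ref{eq:LP2}) with slack factor $(1+\gamma_m)\beta^*/2$.

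\textbf{Subgoal 1 (objective lower bound).} This is a strengthening of Theorem~\ref{thm:0123UB}, which currently gives only $\beta^* \ge 1.75487$, whereas for $m\to\infty$ we need $\beta^* \ge 2/(1+\gamma_m)\approx 1.7975$. The plan is to combine Lemma~\ref{lem:clevermatrix} with the Jensen bound~(\ref{eq:jensenapp}) inside a single convex program rather than taking the pointwise maximum of the two lower bounds, and moreover to exploit that the slack in the Cauchy--Schwarz step of Lemma~\ref{lem:clevermatrix} is large precisely when $\vect{p}^*$ is far from a scalar multiple of $\vect{{\plu}}$. On the extremal elections of the form~(\ref{eq:MandPlu}), $\vect{{\plu}}$ is asymptotically concentrated on three coordinates while $\vect{p}^*$ is supported on all $m$, so the Cauchy--Schwarz gap is substantial and should yield a bound matching $2/(1+\gamma_m)$ to leading order.

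\textbf{Subgoal 2 (transferring feasibility).} Relabel so the biased-metric vector satisfies $0 = x_1 \le x_2 \le \cdots \le x_m$ and set $\Delta_k = x_{k+1}-x_k$. A direct computation using the formula $\SC(j)-\SC(1) = \sum_{k<j}\Delta_k\, s_{I_k\cg j}$ together with the lower bound $2\SC(1) \ge \sum_k \Delta_k(1-s_{1\cg I_k^c})$ derived in the discussion preceding Proposition~\ref{prop:lp2works} shows that the biased-metric constraint is a nonnegative $\Delta_k$-combination of the LP~(\ref{eq:LP2}) constraints
\[
\sum_{j \notin I_k} s_{I_k \cg j}\, p_j \;\le\; \tfrac{1}{2}\bigl(1 - s_{1\cg I_k^c}\bigr), \qquad I_k = \{1,\ldots,k\}.
\]
For $k = 1$ this coincides with the (\ref{eq:LP0123}) constraint for candidate~$1$, so it is satisfied by $\vect{p}^*$ with slack $1/\beta^*$ automatically. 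For $k\ge 2$ I would mimic Lemma~\ref{lem:genreduction}: either some singleton $(0,1,2,3)$ constraint is slack at $\vect{p}^*$, in which case the corresponding coordinate can be zeroed out and the problem reduced to $m-1$ candidates (at the price of a factor depending on $\gamma_{m-1}$), or all singleton constraints are tight so that $\vect{p}^* = M^{-1}(\ones-\vect{{\plu}})$ is fully determined, and the $k$-th inequality becomes an algebraic statement about $M^{-1}(\ones-\vect{{\plu}})$ that we verify directly using $M+M^\top = J-I$ and a vector generalization of the quadratic identity at the heart of Lemma~\ref{lem:clevermatrix}.

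\textbf{Main obstacle.} The hardest step is the all-tight case of Subgoal~2 for $k\ge 2$. The trivial bound $s_{I_k\cg j} \le s_{1\cg j}$ combined with the singleton constraint for candidate~$1$ yields only $\sum_{j\notin I_k} s_{I_k\cg j}\,p_j^* \le (1-\plu(1))/\beta^*$, which is too weak when $1 - s_{1\cg I_k^c}$ is much smaller than $1 - \plu(1)$. Avoiding this loss likely requires combining singleton constraints for several candidates in $I_k$ simultaneously, which in turn demands structural information about $M$ that is only available at worst-case elections. A natural route is to mirror Section~\ref{ssec:worst0123}: first show that the worst-case elections for LP~(\ref{eq:LP0123}) have $M$ and $\vect{{\plu}}$ in the explicit form~(\ref{eq:MandPlu}), then verify the LP~(\ref{eq:LP2}) inequalities directly on that parametrized family. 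However, extending the fudging argument of Section~\ref{ssec:worst0123} to $m\ge 4$ is precisely the multivariate-optimization bottleneck motivating Section~\ref{sec:0123UB}, so bypassing it will likely require a genuinely new idea such as an LP-duality certificate for (\ref{eq:LP0123}), or a partial uniqueness result for optima of the matrix inequality in Lemma~\ref{lem:clevermatrix}.
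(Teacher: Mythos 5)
This statement is Conjecture~\ref{conj:0123}, which the paper explicitly leaves open---there is no proof in the paper to compare against. What you have written is a proof plan, not a proof, and you yourself flag the gaps. Those gaps are real and are precisely the reason the statement remains a conjecture.

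Your decomposition is sound and tracks the paper's own speculative reasoning around Proposition~\ref{prop:lp2works}: the target inequality for a biased metric with vector $0 = x_1 \le \cdots \le x_m$ is indeed implied by the family $\sum_{j \notin I_k} s_{I_k \cg j}\, p_j^* \le \tfrac{(1+\gamma_m)\beta^*}{2}\bigl(1 - s_{1 \cg I_k^c}\bigr)$, and the $k=1$ instance follows from feasibility of $\vect{p}^*$ in~(\ref{eq:LP0123}) once $\beta^* \ge 2/(1+\gamma_m)$. But both subgoals are genuinely open, not merely technical. For Subgoal~1, Theorem~\ref{thm:0123UB} gives $\beta^* \ge 1.75487$ (with a marginal improvement to $\approx 1.76129$ via the Lagrange-multiplier refinement), while you need $\approx 1.79752$ as $m \to \infty$; the heuristic that Cauchy--Schwarz is slack when $\vect{p}^*$ and $\vect{{\plu}}$ misalign is plausible, but you give no way to quantify that slack outside the explicit family~(\ref{eq:MandPlu}), and the paper already notes that the fudging reduction to that family does not scale past $m=3$.

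Subgoal~2 for $k \ge 2$ is the crux, and there is a structural obstruction you should make explicit. The vector $\vect{p}^*$ is determined by $M$ and $\vect{{\plu}}$ alone, whereas the quantities $s_{I_k \cg j}$ and $s_{1 \cg I_k^c}$ appearing in the $k$-th constraint are \emph{not} determined by $M$ and $\vect{{\plu}}$: different elections with the same comparisons matrix and plurality vector give different values. So any successful argument must hold uniformly over that fiber, and cannot be a formal consequence of $M\vect{p}^* = \ones - \vect{{\plu}}$ together with $M + M^\top = J - I$; indeed the paper reports examples where constraints of~(\ref{eq:LP1}) are not implied by those of~(\ref{eq:LP0123}). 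Relatedly, a ``vector generalization of the quadratic identity'' at the heart of Lemma~\ref{lem:clevermatrix} cannot suffice by itself: that lemma uses only $M + M^\top = J - I$ and is tight at $\vect{x} = \vect{0}$ (the $(1,3)$-metric case), where it yields exactly distortion $2$; to go below $2 + \gamma_m$ on the $k \ge 2$ constraints one must inject information that the lemma's technique discards. Discovering what that information is, and how to use it, is the open content of the conjecture---your proposal correctly locates the difficulty but does not resolve it.
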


While Conjecture~\ref{conj:0123} is strictly harder to prove than Conjecture~\ref{conj:biased}, it is interesting because it would imply that the randomized metric distortion problem can be solved with extremely limited information; just the comparisons matrix $M$ and plurality vector $\vect{{\plu}}$. For reference, an election is specified with a list of $m!$ real numbers (the proportions of each ranking permutation), but if Conjecture~\ref{conj:0123} is true, we can solve the problem with just $m^2$ numbers. Another reason Conjecture~\ref{conj:0123} would be surprising is because of known lower bounds. \cite[Theorem~4]{GKM17} showed that a randomized mechanism with access only to $M$ cannot achieve distortion better than $3$, and \cite[Theorem~4]{GHS20} showed that a randomized mechanism  with access only to $\vect{{\plu}}$ cannot get distortion better than $3 - 2/m$. No lower bound (better than Theorem~\ref{thm:LBs}) is known for mechanisms with access to \emph{both} $M$ and $\vect{{\plu}}$.


\bibliography{pras}
\bibliographystyle{alpha}

\appendix
\section{Lower bound for three candidates}\label{sec:appendix}

In this section, we will go through some of the analysis in Section~\ref{sec:LBs} for the special case of $m = 3$, to more concretely see how the $(0,1,2,3)$-metrics work. For this special case, we will suspend our LP view of the distortion, and work directly with the definition instead.

Recall that in our example, we have three candidates $\{1,2,3\}$ and $47.3\%$ of the voters rank $(1, 3, 2)$, $42.4\%$ rank $(2, 3, 1)$, and $10.3\%$ rank $(3,2,1)$. Our three $(0,1,2,3)$-metrics are given in Figure~\ref{fig:0123metrics}.

\begin{figure}[h!]
\centering
\begin{tikzpicture}[scale=1.5]
\node[above] at (1, 2.5) {Metric $d_1$:};
\coordinate (A) at (0,2);
\coordinate (B) at (0,1);
\coordinate (C) at (0,0);

\coordinate (X) at (2,2);
\coordinate (Y) at (2,1);
\coordinate (Z) at (2,0);

\draw[thick, color=green] (A) -- (X);
\draw[thick, color=blue] (A) -- (Y);
\draw[thick, color=blue] (A) -- (Z);

\draw[thick, color=orange] (B) -- (X);
\draw[thick, color=blue] (B) -- (Y);
\draw[thick, color=blue] (B) -- (Z);

\draw[thick, color=orange] (C) -- (X);
\draw[thick, color=blue] (C) -- (Y);
\draw[thick, color=blue] (C) -- (Z);
\draw[color=black, fill=black] (A) circle (1pt);
\node[left] at (A) {1};
\draw[color=black, fill=black] (B) circle (1pt);
\node[left] at (B) {2};
\draw[color=black, fill=black] (C) circle (1pt);
\node[left] at (C) {3};
\draw[color=black, fill=black] (X) circle (1pt);
\node[right] at (X) {132};
\draw[color=black, fill=black] (Y) circle (1pt);
\node[right] at (Y) {231};
\draw[color=black, fill=black] (Z) circle (1pt);
\node[right] at (Z) {321};
\end{tikzpicture}  \qquad \qquad
\begin{tikzpicture}[scale=1.5]
\node[above] at (1, 2.5) {Metric $d_2$:};
\coordinate (A) at (0,2);
\coordinate (B) at (0,1);
\coordinate (C) at (0,0);

\coordinate (X) at (2,2);
\coordinate (Y) at (2,1);
\coordinate (Z) at (2,0);

\draw[thick, color=blue] (A) -- (X);
\draw[thick, color=orange] (A) -- (Y);
\draw[thick, color=red] (A) -- (Z);

\draw[thick, color=blue] (B) -- (X);
\draw[thick, color=green] (B) -- (Y);
\draw[thick, color=blue] (B) -- (Z);

\draw[thick, color=blue] (C) -- (X);
\draw[thick, color=orange] (C) -- (Y);
\draw[thick, color=blue] (C) -- (Z);
\draw[color=black, fill=black] (A) circle (1pt);
\node[left] at (A) {1};
\draw[color=black, fill=black] (B) circle (1pt);
\node[left] at (B) {2};
\draw[color=black, fill=black] (C) circle (1pt);
\node[left] at (C) {3};
\draw[color=black, fill=black] (X) circle (1pt);
\node[right] at (X) {132};
\draw[color=black, fill=black] (Y) circle (1pt);
\node[right] at (Y) {231};
\draw[color=black, fill=black] (Z) circle (1pt);
\node[right] at (Z) {321};
\end{tikzpicture}\qquad \qquad
\begin{tikzpicture}[scale=1.5]
\node[above] at (1, 2.5) {Metric $d_3$:};
\coordinate (A) at (0,2);
\coordinate (B) at (0,1);
\coordinate (C) at (0,0);

\coordinate (X) at (2,2);
\coordinate (Y) at (2,1);
\coordinate (Z) at (2,0);

\draw[thick, color=blue] (A) -- (X);
\draw[thick, color=red] (A) -- (Y);
\draw[thick, color=orange] (A) -- (Z);

\draw[thick, color=red] (B) -- (X);
\draw[thick, color=blue] (B) -- (Y);
\draw[thick, color=orange] (B) -- (Z);

\draw[thick, color=blue] (C) -- (X);
\draw[thick, color=blue] (C) -- (Y);
\draw[thick, color=green] (C) -- (Z);
\draw[color=black, fill=black] (A) circle (1pt);
\node[left] at (A) {1};
\draw[color=black, fill=black] (B) circle (1pt);
\node[left] at (B) {2};
\draw[color=black, fill=black] (C) circle (1pt);
\node[left] at (C) {3};
\draw[color=black, fill=black] (X) circle (1pt);
\node[right] at (X) {132};
\draw[color=black, fill=black] (Y) circle (1pt);
\node[right] at (Y) {231};
\draw[color=black, fill=black] (Z) circle (1pt);
\node[right] at (Z) {321};
\end{tikzpicture}
\caption{$(0,1,2,3)$-metrics for the given election. The green edges are 0, blue are 1, orange are 2, and red are 3.}\label{fig:0123metrics}
\end{figure}
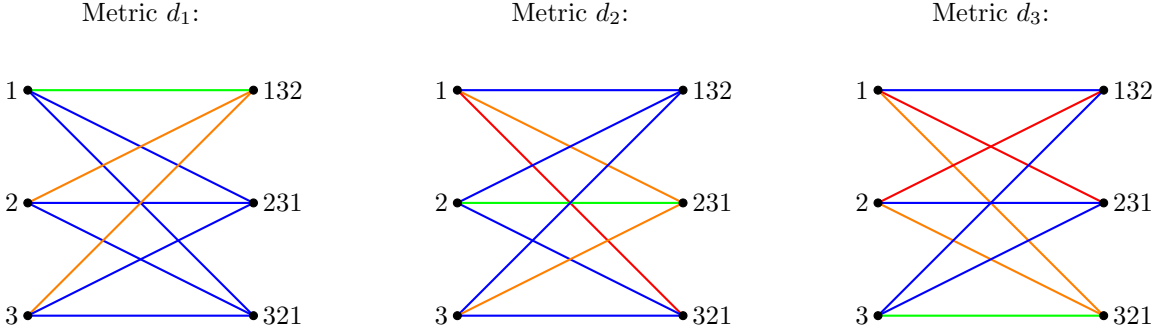
The social costs of the candidates in each of the three metrics are given by the following table.

\begin{center}
\begin{tabular}{c|c|c|c|}
        & $\SC(1)$ & $\SC(2)$ & $\SC(3)$\\
        \hline
$d_1$   & 0.527 & 1.473 & 1.473 \\
$d_2$   & 1.63  & 0.576 & 1.424  \\
$d_3$   & 1.951 & 2.049 & 0.897   
\end{tabular}
\end{center}
Now suppose we have an election rule that assigns weight $p_i$ to candidate $i$, and guarantees distortion $D$ (note that $p_1 + p_2 + p_3 = 1$). Then we can write out the three inequalities $\frac{\sum_i \SC(i)p_i}{\min_j \SC(j)} \leq D$ as
$$p_1 + 2.795p_2 + 2.795p_3 \leq D, \qquad\qquad 2.830p_1 + p_2 + 2.472p_3 \leq D \qquad\qquad 2.175p_1 + 2.284p_2 + p_3 \leq D.$$
Here we round to 3 decimal places for ease. Multiplying the first equation by 0.1514, the second by 0.1594, and the third by 0.1827 we get
\begin{align*}
0.1514 p_1 + 0.4232 p_2 + 0.4232 p_3 &\leq 0.1514 D\\
0.4511 p_1 + 0.1594 p_2 + 0.3940 p_3 &\leq 0.1594 D\\
0.3974 p_1 + 0.4173 p_2 + 0.1827 p_3 &\leq 0.1827 D
\end{align*}
Adding these up, (and being forgiving of numerical errors by allowing 0.999 = 1), we get 
$$p_1 + p_2 + p_3 \leq 0.4935 D\implies D \geq \frac{1}{0.4935} = 2.026.$$

\end{document}